\documentclass[12pt,a4paper]{article}
\usepackage[T1]{fontenc}
\usepackage[utf8]{inputenc}
\usepackage{amsmath, amssymb, amsthm}
\usepackage{natbib}
\usepackage{booktabs}
\usepackage{threeparttable}
\usepackage{graphicx}
\usepackage[colorlinks=true,linkcolor=blue,citecolor=blue]{hyperref}
\usepackage{titlesec}
\usepackage{lmodern}
\usepackage{ragged2e}
\usepackage{rotating}
\usepackage{adjustbox}
\newtheorem{proposition}{Proposition}

\titleformat{\section}{\large\bfseries}{\thesection}{1em}{}
\titleformat{\subsection}{\normalsize\bfseries}{\thesubsection}{1em}{}

\title{\vspace{-2.5 cm}
\Large \textbf{Reputation, Exposure, and Exit: \\ Organizational Turnover after \#MeToo}
}

\author{
Roy Baharad\thanks{Hebrew University of Jerusalem}
\and
Asaf Eckstein\thanks{Hebrew University of Jerusalem}
\and
Gideon Parchomovsky\thanks{Hebrew University of Jerusalem and University of Pennsylvania}
\and
Rok Spruk\thanks{University of Ljubljana \& University of Western Australia}
}
\bigskip

\date{\vspace{0.6 cm}\today}

\begin{document}

\maketitle

\begin{abstract}
\noindent
We study how economy-wide reputational shocks reshape corporate governance by analyzing board resignations following the \#MeToo movement. We conceptualize the October 2017 revelations surrounding Harvey Weinstein as a common information shock that raised the expected cost of misconduct and intensified scrutiny across firms. Identification exploits cross-sectional heterogeneity in pre-shock exposure, measured by the frequency of Item 5.02 Form 8-K filings, which proxy for firms' sensitivity to governance-related disclosure and reputational risk.  We develop a model of board exit in which directors respond to shifts in reputational pressure through dynamic, belief-driven hazard rates, generating heterogeneous and potentially nonlinear responses. Empirically, we implement a continuous-treatment difference-in-differences design, complemented by dynamic event-study analysis and dynamic matrix completion estimates.  We find that firms with greater exposure experience significantly larger increases in resignation flows following the shock, with effects concentrated in the immediate aftermath and amplified through board-level interactions. The results provide causal evidence that reputational shocks can induce rapid and systematic governance turnover, highlighting the role of information and exposure in shaping institutional responses.
\end{abstract}
\noindent \small \textbf{JEL Codes:} C23, D83, G34, J24, L22

\noindent \small \textbf{Keywords:} Reputational shocks; corporate governance; board resignations; \#MeToo movement; continuous treatment; difference-in-differences; Bayesian belief updating

\titlespacing*{\section}{0pt}{2.5ex plus 1ex minus .2ex}{1.5ex}
\titlespacing*{\subsection}{0pt}{2ex}{1ex}

\section{Introduction}

The \#MeToo Movement transformed the governance environment in which public corporations operate. Beginning with the October 2017 revelations surrounding Harvey Weinstein, the resulting cascade of allegations did more than exposing particular instances of sexual misconduct: they changed the social and institutional susceptibility of organizations. The movement thus altered the perceived consequences of organizational inaction or misbehavior at large, mainly in raising their reputational cost.
In this paper, we study the corporate-governance implications of that shock by estimating the effect of \#MeToo on executive departure. It should be stressed at the outset that the central claim is not that every post-\#MeToo departure reflects the revelation of sexual misconduct. This argument cannot be made because under securities law, firms report director and officer departures through Item 5.02 of Form 8-K, but they are not generally required to disclose the reasons for those departures \citep{eckstein2025silence}. Decisions such as \textit{Lopez v. CTPartners} which declined to recognize an affirmative securities-law duty to disclose sexual harassment allegations, reinforce this point. 
\\
\\
This paper, therefore, makes no attempt to attribute departures to misconduct. Such an exercise would require information that is not systematically available. Instead, we are focused on a broader point and inquire whether \#MeToo produce corporate-governance spillovers? Put differently, we ask if the movement's transformation of reputational expectations and accountability norms alter the departure behavior in public companies, especially firms that were already more exposed to governance-sensitive disclosure environments. The question of interest, then, is whether firms more exposed to the governance and reputational channels through which \#MeToo operated experienced systematically larger changes in exit behavior after the movement changed the corporate environment.
\\
\\
Governance spillovers are central here, since social movements often affect institutions beyond the immediate domain in which they concentrated. \#MeToo was nominally a movement against sexual harassment and misconduct, but once a new norm of accountability becomes more salient, it could permeate general corporate practice. Boards and managers may become more attentive not only to sexual misconduct, but also to corporate culture, ethical lapses, internal complaints and other reputational risks. All are practices that, prior to the social change creditable to \#MeToo, might have been tolerated or quietly managed. Due to such social change, however, investors and employees may update their expectations about what qualifies as a serious governance failure. In this way, we treat \#MeToo not merely as a misconduct revelation shock, but as a broader accountability shock.
\\
\\
This distinction is important both theoretically and empirically. If \#MeToo only revealed previously hidden misconduct, then one might expect its governance effects to be concentrated in firms where such misconduct was actually uncovered. On the other hand, if \#MeToo also permeated the broader norms of corporate accountability, its effects should extend more broadly: a board may accept a resignation or remove an official more quickly, internal reviews might be initiated more aggressively, and overall, the corporation may become less tolerant of any reputationally costly leadership issue. The immediate empirical implication, then, is that the effect of \#MeToo may be visible in aggregate departure patterns even when the public record does reveal the reason for each departure.
We treat the Weinstein revelations as a common reputational and informational shock to public corporations. 
\\
\\
The shock was common because it affected the entire corporate environment rather than a discrete set of treated firms. But crucially, firms were not equally exposed to that shock. Some firms were more embedded in governance environments in which leadership turnover, formal disclosure, and reputational scrutiny were already more salient before \#MeToo. We hypothesize that these firms were likely to be more sensitive to a shift in accountability norms. We capture this heterogeneity using firms' pre-shock frequency of Item 5.02 filings. Our theoretical model and empirical strategy follow directly from this conceptual framework. Because \#MeToo was an economy-wide shock, there is no natural untreated group. The relevant comparison is therefore not between firms affected by the movement and firms unaffected by it. Rather, it is between firms with greater and lesser pre-shock exposure to governance-related disclosure. We ask whether high-exposure firms experienced larger changes in resignation and termination flows after October 2017, compared to low-exposure firms. The design thus considers exposure as a continuous treatment intensity and estimates whether the post-shock change in departures varies with that pre-shock exposure.
\\
\\
In terms of interpretation, we should again say that a positive effect should not be read as evidence that high-exposure firms had more undisclosed cases of sexual misconduct. This would be too strong and, given the available data, unsupported. The more appropriate interpretation is actually more propitious: firms more exposed to governance and reputational accountability responded more strongly to the post-\#MeToo environment. The mechanism points to a change in governance norms: increased scrutiny, heightened sensitivity to reputational risks and stronger shareholder expectations.
\\
\\
We implement our empirical strategy using a continuous-treatment difference-in-difference design, which is then supplemented by dynamic event-study analysis and dynamic matrix completion methods. The outcome of interest is naturally the change in departure intensity rather than the mere level of departures, as firms differ in their ordinary turnover regimes: some routinely experience more transitions than others for reasons orthogonal to \#MeToo. By focusing on changes in resignation and termination flows, the analysis aims to isolate whether the discontinuous shift in exit behavior produced by the post October 2017 environment was stronger among more exposed firms. 
\\
\\
The paper makes two principal contributions. First, it contributes to the literature on social movements and corporate governance, showing how a movement that did not in itself change corporate law nonetheless altered corporate behavior. Second, the paper contributes methodologically to the study of common shocks, using exposure-based design for studying how common shocks generate heterogenous institutional response. 
Structurally, the paper proceeds as follows. Part 2 provides the background on the \#MeToo movement, reviews the related literature and situates the paper's contributions within it. Part 3 develops a theoretical model linking reputational pressure, belief updating, norm diffusion and board exit. Part 4 presents the empirical strategy. Part 5 describes the data and collection process. Part 6 presents the results. Part 7 discusses the implications. Part 8 concludes.

\section{Background and Related Literature}

The term ``Me Too'' originated with activist Tarana Burke, who is also credited with being the founder of the movement on account of her years-long campaign against sexual abuse and work with survivors \citep{garcia2017burke}. However, the turning point in the life of the movement came over a decade later. On October 5, 2017, journalists Jodi Kantor and Megan Towhey of The New York Times published a story in which they accused producer Harvey Weinstein of decades of predatory behavior \citep{kantor2017weinstein}. In that story, actress Ashley Judd shared information about unwanted sexual advancements made at her by Weinstein. A few days later, Journalist Ronan Farrow of The New Yorker followed up with his own story on Weinstein, offering additional accounts of his involvement in sexual abuse \citep{farrow2017weinstein}. Reacting to these stories, actress Alyssa Milano sent out a tweet, asking women to tweet the hashtag \#MeToo if they experienced sexual harassment or assault. Milano's tweet changed the world forever. Within hours, Milano received tens of thousands of tweets from individuals who responded to her call. Within 24 hours, Facebook counted over 12 million messages and communications sparked by Milano's call \citep{dorking2017metoo}.
\\
\\
The movement not only gave voice to multiple women who suffered sexual abuse, but also exposed the scope and pervasiveness of the problem. The movement's impact quickly transcended the U.S., uniting women all over the world. Why was the \#MeToo movement so impactful? Scholars attribute its success to the fact that \#MeToo provided a much-needed informal complaint channel that served as an alternative to institutionalized complaint channels, such as the police and courts. The use of the hashtag ``MeToo'' enabled sexual abuse victims to utilize social platforms to accuse and shame aggressors. \cite{tuerkheimer2019beyond} argued that one of the central advantages of informal channels over the formal ones consisted of the ability to empower the victim by giving her the agency to enter into a new system over which she has control. Victims were given the option to enter into survivor-friendly spaces, in which there isn't free access to third parties, or into one such as \#MeToo, which Tuerkheimer labels as ``The New Court of Public Opinion,'' in which there is open access and stronger retaliation against the aggressor. 
\\
\\
The victims' stories, shared via social media, led the media to intensify its focus on sexual misconduct by other powerful men. An additional powerful person was accused of abusive behavior every 20 minutes in the following weeks \citep{kannan2017powerful}. The \#MeToo movement has spanned many industries including fashion, music, sports, entertainment, architecture, philanthropy, academia, and politics \citep{tuerkheimer2019beyond}. But its impact on the workplace has been even greater. Many of the transgressions involved in the \#MeToo movement happened in the workplace. The victims were either subordinates of the aggressors or employed by the same company as the offenders, and many of the \#MeToo cases resulted in substantial lawsuits that listed the individual aggressors and the companies for which they worked as joint defendants \citep{fotiadis2024employerstoo}. Moreover, companies had to pay out significant amounts of money to women-plaintiffs in order to settle those cases, which resulted in many liability insurance carriers leaving the market, and with others stipulating stricter coverage exclusions and charging enhanced premiums \citep{meyers2021insurance}. \cite{tippett2018legal} pointed out that the legal risk to companies is likely to grow in the future as courts update the standard of what qualifies as ``severe or pervasive'' harassment. In response to the increased exposure to liability, companies began implementing enhanced safeguards and educational programs designed to ensure a safe working environment for women \citep{hymowitz2017workplace}. 
\\
\\
Likewise, studies of the stock market revealed a negative reaction to \#MeToo cases. By way of example, \cite{borellikjaer2021metoo} found \textit{``a negative 1.5\% cumulative abnormal return over the event day and the following trading day, which corresponds to an average impact of \$450 million for the [199] companies in [the] sample.}'' The authors point out that the ``effect is significant and robust across a wide array of inference tests and model choices.'' They further note that ``the involvement of a CEO is consistently a strong driver of negative abnormal returns (a further of 5\% point drop in return on top of the average effect)'' and ``that abnormal increases in media coverage of the firm around the scandal strongly associate with negative return (around -5\% points per standard deviation increase in coverage), whereas instances where companies self-disclose the misconduct are found to be punished less by the market (3\% points less so).'' \cite{bouzzine2022reputation} likewise report that sexual misconduct accusations raised against executives had resulted in substantial decline in stock returns. 
\\
\\
Our approach corresponds to extant academic writings estimating the effects of the \#MeToo movement across multiple domains. \cite{gertsberg2025unintended}, for instance, has demonstrated that the movement resulted in reduced female work-productivity caused by a decrease in collaborative work effort with male employees, explaining that such a decrease stems from men's perception of a higher risk of sexual harassment accusations. Gertsberg frames this outcome as an ``unintended consequence'' of the \#MeToo movement, damaging career opportunities of women. Similar arguments have been raised in business contexts. A more favorable result was reported by \cite{levy2025effects}, who studied the effect of the movement on reported sex crimes, finding that upon the outburst of \#MeToo, reported sex crimes around the world increased by 10\%. According to their study, the \#MeToo movement gave victims the courage to come forth and reveal the abuse they had suffered. 
\\
\\
Previous works likewise paid attention to the effect of \#MeToo on closely related issues in corporate law and governance. As noted, \cite{borellikjaer2021metoo} have investigated the impact of the movement on stock prices. One of their findings indicates that the movement substantially increased the odds for a misconduct scandal to be uncovered. Following \#MeToo, the authors report, the average number of sexual harassment scandals has quadrupled from 1.4 to 5.7 per month. \cite{cici2021mutualfund} suggest that women employees' productivity has improved pursuant to \#MeToo, as the movement reduced the risk of sexual harassment that prevented them from optimally extracting their human capital. \cite{fairchild2023gender} identify that both investors and consumers tend to respond negatively to reactive, symbolic steps taken by corporations following \#MeToo.
\\
\\
Another interesting insight was offered by \cite{billings2022investors}, who noted that post-\#MeToo, investors started perceiving inclusive culture as beneficial, rewarding companies that historically included women in boards, whereas companies with history of exclusion were suffering negative market reactions. Along similar lines, \cite{lins2025sexism} suggest that companies with a culture of inclusion have earned higher stock returns after the outburst of \#MeToo.
\\
\\
Contributors have likewise discussed observable changes in companies even without using financial instruments per se. As mentioned previously, \cite{tippett2018legal} studies the movement's effect on the odds of confronting liability due to changes it induced in employment discrimination law. Furthermore, \cite{miazad2021sex} delves into corporate practices to study how workplace culture has changed pursuant to the movement, including board diversity, pay gaps, denial of employee compensation in case of termination due to sexual harassment, and more. In addition, \cite{meyers2021insurance} explore the involvement of liability insurers in mitigating sexual harassment risks by conditioning coverage on companies' active efforts to eradicate them. \cite{windemuth2019boilerplate} identifies that corporate lawyers responded to risks associated with \#MeToo-induced norm-shifting by adding an explicit clause in mergers and acquisitions agreements, by which the target company declares that it is unaware of any sexual misconduct allegations raised against its senior officials. \cite{boyle2018social} suggest that the movement dramatically changed corporate culture, with 79 percent of firms reporting distinct differences between the pre- and post-\#MeToo era.
\\
\\
Some have also attempted to focus on the effect of \#MeToo on departures, but treatments seem to remain mainly anecdotal. In other words, some have used norm-shifting following \#MeToo as a potential explanation of anomalies. For instance, \cite{challenger2020ceo} identified increased CEO exits in 2019, suggested that \textit{``[f]ollowing the \#MeToo movement, companies were determined to hold CEOs accountable for lapses in judgment pertaining to professional and personal conduct, creating higher standards at the C-level. What may have gone unrecognized or was downplayed in the past was not overlooked by boards, shareholders, or the general public in 2019.''}

\section{Theoretical Mechanism: Reputational Pressure, Beliefs, and Board Exit}

This section develops a microeconomic foundation for board resignation decisions under reputational shocks. The objective is to derive a structural mapping between an economy-wide information shock, the \#MeToo movement, and firm-level resignation flows, and to provide a formal interpretation of the reduced-form estimands. The framework integrates dynamic discrete choice, Bayesian belief updating, and hazard-based exit behavior in a unified setting. A central implication of the model is that exposure operates as a sufficient statistic governing the elasticity of resignation responses to a common shift in the informational and reputational environment, consistent with recent work on heterogeneous treatment effects under aggregate shocks \citep{chernozhukov2021causal}.

\subsection{Environment and Timing}

We begin by specifying the decision problem faced by individual directors and senior executives. Consider a firm $i$ with a board indexed by $j \in \{1,\dots,J_i\}$ and discrete time periods $t = 1,2,\dots,T$. In each period, directors simultaneously decide whether to remain affiliated with the organization or resign. Once a director exits, the decision is irreversible and the individual no longer participates in subsequent periods.
\\
\\
This formulation reflects an important institutional feature of board turnover. Directors do not continuously adjust their level of organizational involvement. Instead, departures typically occur as discrete organizational events triggered by changes in incentives, expectations, or perceived risks. The relevant economic decision is therefore not how much effort a director supplies, but whether continued affiliation with the organization remains privately optimal. The distinction is important because large informational shocks such as the \#MeToo movement may alter incentives without necessarily changing the underlying structure of the organization itself.
\\
\\
The decision problem is inherently forward-looking. Directors derive benefits from remaining affiliated with a firm, including compensation, prestige, influence over strategic decision-making, access to professional networks, and future career opportunities. At the same time, remaining affiliated exposes directors to reputational, legal, and professional risks that may materialize in the future. Consequently, resignation decisions depend not only on current organizational conditions but also on expectations regarding future scrutiny and future information revelation.
\\
\\
To formalize this trade-off, let $d_{ijt} \in \{0,1\}$ denote the director's action, where $d_{ijt}=1$ corresponds to remaining and $d_{ijt}=0$ corresponds to resignation. Let $V_{ijt}$ denote the value of remaining affiliated with firm $i$ at time $t$. Normalizing the value of resignation to zero, the continuation value satisfies the Bellman equation

\begin{equation}
V_{ijt}
=
\max
\left\{
0,\;
u_{ijt}
+
\beta \mathbb{E}_t[V_{ij,t+1}]
\right\},
\end{equation}

where $\beta \in (0,1)$ denotes the discount factor and $u_{ijt}$ represents the current-period utility flow associated with continued affiliation. The first term inside the maximization operator corresponds to resignation, while the second term captures the value of remaining on the board. Directors therefore remain affiliated whenever the expected present value of future benefits exceeds the value of exiting.
\\
\\
Equation (1) highlights a central feature of the model. Directors do not react solely to contemporaneous events. Instead, they continuously reassess the future consequences of remaining affiliated with the organization. This forward-looking structure is essential for understanding the effects of the \#MeToo movement. The shock need not generate immediate firm-specific revelations in order to affect behavior. Rather, the movement may alter expectations regarding future investigations, future information disclosure, stakeholder scrutiny, litigation risk, or reputational sanctions. As a consequence, directors may choose to resign even when current organizational conditions remain unchanged because the expected future cost of affiliation has increased.
\\
\\
This interpretation distinguishes the mechanism developed here from models in which organizational turnover occurs only in response to realized misconduct. In the present framework, departures may arise from changes in expectations rather than changes in fundamentals. The \#MeToo movement therefore operates primarily as a forward-looking informational shock that changes how directors evaluate future organizational risks. This feature is particularly important for understanding why organizational responses may emerge rapidly following a common shock and why effects may persist over multiple periods as beliefs continue to evolve.
\\
\\
The dynamic structure follows the canonical framework of forward-looking discrete choice models developed by \citet{rust1987optimal} and subsequently extended by \citet{aguirregabiria2010dynamic}. The innovation in the present setting is that the continuation value is directly affected by changes in the informational and reputational environment, allowing aggregate social movements to generate heterogeneous organizational responses across firms with different levels of exposure.

\subsection{Payoffs and Reputational Risk}

We next specify how changes in the informational environment enter directors' payoff functions. As discussed above, directors derive benefits from continued affiliation with the firm but also incur costs associated with governance responsibilities, reputational exposure, and potential future sanctions. The \#MeToo movement affects behavior by altering the magnitude and perceived likelihood of these costs. The per-period payoff from remaining affiliated with the organization is given by

\begin{equation}
u_{ijt}
=
B_{ijt}
-
C_{ijt},
\end{equation}

where $B_{ijt}$ denotes the benefits associated with board membership and $C_{ijt}$ denotes the expected costs of continued affiliation. The benefit component encompasses both pecuniary and non-pecuniary rewards. These include direct compensation, prestige associated with board membership, influence over strategic decision-making, access to valuable professional networks, and future career opportunities generated through organizational affiliation. In many settings these benefits are substantial and help explain why directors do not resign in response to ordinary fluctuations in firm performance or routine governance disputes. The central mechanism of the model operates through the cost component. Specifically, expected costs are given by

\begin{equation}
C_{ijt}
=
C_{ijt}^{0}
+
\phi_i \cdot \rho_t \cdot \mathbb{E}_t[\theta_{ijt}]
+
\epsilon_{ijt}.
\end{equation}

This decomposition separates three conceptually distinct sources of cost. The first component, $C_{ijt}^{0}$, captures baseline costs that exist independently of the informational environment. These may include ordinary governance responsibilities, monitoring obligations, legal liabilities associated with board service, opportunity costs of time, and personal circumstances unrelated to reputational concerns. Such costs are present even in the absence of major informational shocks and constitute the normal burden of organizational oversight.
\\
\\
The second component captures reputational exposure and represents the central mechanism of the model. It consists of the interaction between firm-level exposure, aggregate reputational pressure, and directors' beliefs regarding latent organizational vulnerability. The multiplicative structure is intentional. It implies that the consequences of an aggregate reputational shock depend not only on the magnitude of the shock itself but also on the firm's underlying exposure to reputational scrutiny.
\\
\\
The parameter $\phi_i$ captures firm-specific exposure to governance-sensitive reputational risk. Conceptually, $\phi_i$ summarizes organizational characteristics that make a firm particularly vulnerable to adverse information once public scrutiny intensifies. Firms characterized by frequent governance-related disclosures, extensive public visibility, complex organizational structures, or histories of executive turnover are likely to exhibit larger values of $\phi_i$ than otherwise similar firms operating in more stable governance environments.
\\
\\
Importantly, exposure should not be interpreted as misconduct itself. Rather, it captures the extent to which organizational conditions make future reputational costs particularly sensitive to changes in the informational environment. Two firms may exhibit identical underlying governance quality while differing substantially in exposure if one operates in a setting where governance-related information is more likely to become publicly visible or economically consequential.
\\
\\
The variable $\rho_t$ represents the economy-wide reputational price attached to adverse information. A useful interpretation is that $\rho_t$ measures the expected social, legal, regulatory, and economic consequences associated with organizational misconduct once such information becomes publicly salient. When $\rho_t$ is low, governance failures generate relatively limited consequences. When $\rho_t$ is high, identical governance failures become substantially more costly.
\\
\\
The final component, $\mathbb{E}_t[\theta_{ijt}]$, captures directors' beliefs regarding latent reputational vulnerability. This term reflects the perceived probability that information capable of generating reputational, legal, or governance-related costs will become relevant in the future. Importantly, vulnerability itself need not change. Organizational responses may arise because beliefs change even when underlying organizational conditions remain constant. This distinction is central to the interpretation of the \#MeToo movement as an informational shock rather than merely a sequence of firm-specific events.
\\
\\
The specification in equation (3) draws on the literature on reputational incentives, career concerns, and dynamic organizational behavior \citep{holmstrom1999managerial,fudenberg1992maintaining}. In these frameworks, agents internalize expected future penalties associated with adverse information and adjust behavior accordingly. The novelty here is that reputational costs are allowed to depend on both aggregate informational conditions and firm-specific exposure, thereby generating heterogeneous responses to a common shock.
\\
\\
We model the \#MeToo movement as a structural break in the aggregate reputational environment:

\begin{equation}
\rho_t
=
\begin{cases}
\rho_0, & t<T_0,\\
\rho_1, & t\geq T_0,
\end{cases}
\qquad
\rho_1>\rho_0,
\end{equation}

where $T_0$ corresponds to October 2017. This representation captures the idea that the Weinstein revelations fundamentally altered the expected consequences associated with misconduct and organizational inaction. Following the shock, investors, employees, regulators, media organizations, and boards became substantially more attentive to signals of misconduct and more willing to impose sanctions on organizations perceived as failing to address such concerns. The movement therefore increased the reputational price attached to adverse information even for firms in which no new allegations immediately emerged.
\\
\\
An important implication follows immediately. Because firms differ in exposure, identical increases in aggregate reputational pressure generate heterogeneous organizational responses. Differentiating equation (3) with respect to $\rho_t$ yields

\begin{equation}
\frac{\partial C_{ijt}}
{\partial \rho_t}
=
\phi_i
\cdot
\mathbb{E}_t[\theta_{ijt}],
\end{equation}

which is increasing in firm-level exposure. The same increase in reputational pressure therefore generates larger increases in expected costs among highly exposed firms. This simple comparative static provides the fundamental intuition underlying the empirical design.
\\
\\
In the empirical analysis, the latent exposure parameter $\phi_i$ is proxied using Item 5.02 SEC Form 8-K disclosure intensity measured prior to October 2017. Because this measure is constructed entirely from pre-shock disclosures, it captures ex ante organizational exposure rather than post-shock organizational responses. The exposure parameter therefore serves as the critical bridge between the theoretical framework and the empirical identification strategy developed below.

\subsection{Bayesian Updating and Information Structure}

The model developed thus far treats reputational costs as functions of expectations. We now make explicit how those expectations are formed and how they evolve following a large-scale informational shock. A central feature of the framework is that directors operate under incomplete information. Board members rarely possess perfect knowledge regarding all dimensions of organizational behavior. Even within well-governed firms, directors observe organizational conditions only through a collection of imperfect signals generated by internal reporting systems, employee complaints, litigation, regulatory investigations, media coverage, and broader public discourse. As a consequence, directors cannot directly observe the firm's true level of reputational vulnerability. Instead, they must infer this vulnerability from available information.
\\
\\
This informational problem becomes particularly important in the presence of large-scale social movements. Events such as the \#MeToo movement do not merely reveal new information about specific organizations. More fundamentally, they alter the informational environment through which organizational behavior is interpreted. Information that may previously have been ignored, discounted, or viewed as idiosyncratic suddenly becomes more credible, more salient, and more informative regarding future organizational outcomes.
\\
\\
The \#MeToo movement can therefore be interpreted not only as a reputational shock but also as an informational shock. Following the Weinstein revelations, allegations of workplace misconduct were no longer viewed as isolated events. Instead, stakeholders increasingly interpreted such allegations as signals of broader organizational failures in governance, oversight, and accountability. This change altered how directors evaluated future reputational risk even in firms where no new allegations immediately emerged.
\\
\\
To formalize this mechanism, let

\[
\theta_{ijt} \in [0,1]
\]

denote the latent reputational vulnerability associated with director $j$ in firm $i$ at time $t$. The variable $\theta_{ijt}$ should be interpreted broadly as the probability that information capable of generating reputational, legal, or governance-related costs becomes relevant in the future. Importantly, this vulnerability is not directly observable. Directors therefore form beliefs regarding $\theta_{ijt}$ using available information.
\\
\\
We assume that prior beliefs follow a Beta distribution,

\begin{equation}
\theta_{ijt}
\sim
\text{Beta}(a_{it},b_{it}),
\end{equation}

where $a_{it}$ and $b_{it}$ summarize the information accumulated before period $t$. This specification is convenient because it admits a tractable Bayesian updating rule while preserving a natural interpretation of beliefs as probabilities. The prior mean is given by

\[
\frac{a_{it}}
     {a_{it}+b_{it}},
\]

which captures directors' assessment of organizational vulnerability before observing new information. Directors subsequently observe an aggregate public signal, denoted by $s_t$, which summarizes information regarding the prevalence, credibility, and consequences of workplace misconduct. The signal may incorporate media reporting, legal developments, regulatory actions, public allegations, and broader societal discussion. Crucially, the signal is not firm-specific. Rather, it reflects information available to all directors operating within the broader corporate environment.
\\
\\
Let $\kappa_t$ denote the precision attached to this signal. Precision determines the weight directors assign to newly observed information relative to their prior beliefs. Posterior beliefs are therefore given by

\begin{equation}
\mathbb{E}_t[\theta_{ijt}]
=
\frac{
a_{it}
+
\kappa_t s_t
}
{
a_{it}
+
b_{it}
+
\kappa_t
}.
\end{equation}

Equation (7) highlights a central feature of the model. Posterior beliefs represent a weighted average of prior information and newly observed signals. When signal precision is low, directors place relatively little weight on new information and beliefs remain largely anchored by prior assessments. Conversely, when signal precision is high, public information receives greater weight and beliefs become substantially more responsive to newly observed events. The \#MeToo movement is modeled as a discrete increase in informational precision:

\begin{equation}
\kappa_t
=
\begin{cases}
\kappa_0, & t<T_0, \\
\kappa_1, & t\geq T_0,
\end{cases}
\qquad
\kappa_1>\kappa_0.
\end{equation}

The interpretation is straightforward. Prior to October 2017, allegations of workplace misconduct often generated limited informational content regarding broader organizational quality. Following the Weinstein revelations, however, identical allegations became substantially more informative regarding governance failures, oversight deficiencies, and latent organizational vulnerabilities. Consequently, directors rationally placed greater weight on publicly available information when evaluating future reputational risk.
\\
\\
An important implication follows immediately. The \#MeToo shock affects behavior through two distinct channels. The first channel operates through the increase in the reputational price of adverse information, represented by the rise in $\rho_t$ described in the previous subsection. The second channel operates through the increase in informational precision represented by $\kappa_t$. Even if the underlying level of organizational vulnerability remains unchanged, directors may revise beliefs because public information becomes more credible and more informative.
\\
\\
This distinction is critical for interpreting the empirical results. The model does not require new firm-specific allegations to emerge after October 2017. Instead, behavior may change because directors update their beliefs regarding future organizational risks. The same informational shock therefore affects even firms that experience no immediate disclosure event. In this sense, the model provides a microeconomic foundation for why a common societal shock can generate widespread organizational adjustment despite substantial heterogeneity in realized misconduct.
\\
\\
The Bayesian updating mechanism also generates a natural source of persistence. Information arrives gradually, beliefs evolve over time, and organizational responses need not occur immediately. Directors who initially remain affiliated with the firm may subsequently revise their beliefs as additional information accumulates. Consequently, the model predicts that organizational turnover may persist well beyond the initial shock period, a prediction that will play an important role in interpreting the dynamic treatment effects estimated below.

\subsection{Resignation Hazard and Aggregation}

We now connect the directors' optimization problem to observable resignation behavior. Given the payoff structure developed above, resignation emerges as a threshold decision. Directors remain affiliated with the firm as long as the expected benefits of continued affiliation exceed the expected costs. Once the expected cost of remaining becomes sufficiently large relative to the benefits, resignation becomes privately optimal.

Formally, let

\[
R_{ijt}
=
\mathbf{1}\{V_{ijt}=0\}
\]

denote an indicator variable equal to one if director $j$ resigns from firm $i$ in period $t$ and zero otherwise. Resignation therefore occurs whenever the continuation value associated with remaining affiliated falls below the value of exiting.
\\
\\
This formulation highlights an important implication of the model. Resignation need not be triggered by the realization of adverse information itself. Rather, resignation occurs when expectations regarding future costs become sufficiently unfavorable. Consequently, directors may choose to exit even in the absence of new firm-specific allegations if changing beliefs or increased scrutiny alter the expected value of continued affiliation.
\\
\\
To obtain a tractable representation of resignation behavior, assume that the idiosyncratic shock $\epsilon_{ijt}$ follows a standard logistic distribution. Under this assumption, the probability of resignation can be written as

\begin{equation}
\mathbb{P}(R_{ijt}=1)
=
\frac{
1
}{
1+
\exp
\left(
B_{ijt}
-
C_{ijt}^{0}
-
\phi_i\rho_t\mathbb{E}_t[\theta_{ijt}]
\right)
}.
\end{equation}

This representation follows directly from standard random utility models and discrete choice frameworks \citep{mcfadden1974conditional}. The logistic specification is particularly useful because it delivers a smooth mapping between latent reputational pressure and observed resignation behavior. Small changes in beliefs or reputational incentives need not generate immediate exits for all directors. Instead, the probability of resignation increases continuously as expected costs rise relative to expected benefits.
\\
\\
Several comparative statics follow immediately. First, resignation probabilities increase as aggregate reputational pressure $\rho_t$ rises. Second, resignation probabilities increase with firm-level exposure $\phi_i$. Third, resignation probabilities increase when directors become more pessimistic regarding latent organizational vulnerability, as reflected in higher values of $\mathbb{E}_t[\theta_{ijt}]$. These relationships provide the direct behavioral channel linking the \#MeToo shock to organizational turnover.
\\
\\
The logistic structure also highlights an important source of heterogeneity. Directors differ in compensation, career opportunities, reputational capital, and outside options. Consequently, identical informational shocks need not generate identical behavioral responses. Some directors may choose to remain affiliated despite substantial increases in reputational pressure, whereas others may exit following relatively modest changes in expected costs. Aggregate organizational responses therefore emerge from the interaction between common informational shocks and heterogeneous individual incentives.
\\
\\
To connect the model more closely to observed turnover dynamics, define the resignation hazard

\begin{equation}
h_{ijt}
=
\mathbb{P}
\left(
R_{ijt}=1
\mid
R_{ij,t-1}=0
\right).
\end{equation}

The hazard rate measures the conditional probability that a director exits in period $t$ given that the director remained affiliated with the firm through period $t-1$. This object is particularly useful because resignation is fundamentally an event-history process. Directors who have already exited no longer participate in future decisions, implying that turnover is naturally characterized as a sequence of conditional exit probabilities.
\\
\\
The hazard formulation also provides a convenient interpretation of the \#MeToo shock. The movement does not directly determine whether a particular director resigns. Instead, it shifts the hazard of resignation by altering reputational pressure and informational precision. Increases in resignation flows therefore arise because the probability of exit rises among directors who remain at risk of departure. The final step is to aggregate individual resignation hazards into a firm-level outcome. Summing across all directors yields

\begin{equation}
Y_{it}
=
\sum_{j=1}^{J_i}
h_{ijt}.
\end{equation}

The variable $Y_{it}$ represents expected resignation intensity within firm $i$ during period $t$. This aggregation is important because the empirical analysis is conducted at the firm-month level rather than the individual-director level. The model therefore requires a mapping from individual behavioral decisions to observable organizational outcomes.
\\
\\
Several features of the aggregation deserve emphasis. First, resignation intensity is fundamentally different from a binary turnover indicator. A binary indicator records only whether at least one resignation occurred. By contrast, resignation intensity captures the expected number of departures within a given period. This distinction is important because organizational responses to reputational shocks often involve multiple simultaneous departures rather than isolated exits.
\\
\\
Second, the aggregation naturally accommodates heterogeneity in board size. Firms differ substantially in the number of directors and executives potentially exposed to reputational pressure. Summing across individual hazards therefore provides a measure of organizational adjustment that reflects the cumulative response of all individuals operating within the firm's governance structure.
\\
\\
Finally, equation (10) establishes the critical bridge between the theoretical framework and the empirical analysis. The observed outcome variable is not a primitive object but rather the aggregation of individual resignation hazards generated by forward-looking optimization, Bayesian belief updating, and exposure-dependent reputational incentives. Consequently, empirical estimates of changes in resignation intensity can be interpreted as evidence regarding the underlying behavioral mechanisms embedded in the model rather than merely reduced-form correlations between the \#MeToo movement and organizational turnover.

\subsection{Comparative Statics}

Having characterized individual resignation behavior and its aggregation to the firm level, we now examine how changes in the informational environment affect organizational turnover. The comparative statics provide the central theoretical predictions of the model and establish the link between reputational shocks, firm-level exposure, and resignation intensity.
\\
\\
Recall that organizational turnover arises because directors compare the benefits of continued affiliation against expected reputational costs. Since those costs depend on both aggregate reputational pressure and directors' beliefs regarding latent organizational vulnerability, changes in the informational environment alter resignation hazards and, consequently, aggregate resignation intensity.

Let

\[
f_{ijt}
=
\Lambda'
\left(
B_{ijt}
-
C_{ijt}^{0}
-
\phi_i\rho_t\mathbb{E}_t[\theta_{ijt}]
\right)
\]

denote the derivative of the logistic hazard function with respect to its index. Differentiating aggregate resignation intensity with respect to the economy-wide reputational pressure parameter yields

\begin{equation}
\frac{\partial Y_{it}}
{\partial \rho_t}
=
\sum_{j=1}^{J_i}
f_{ijt}
\cdot
\phi_i
\cdot
\mathbb{E}_t[\theta_{ijt}].
\end{equation}

Equation (11) immediately establishes that increases in reputational pressure raise organizational turnover. The effect is larger when directors perceive greater latent vulnerability and when firms exhibit greater exposure to governance-sensitive reputational risk. Intuitively, an increase in reputational pressure raises the expected cost of remaining affiliated with the organization, thereby increasing the likelihood of resignation among directors already concerned about future scrutiny.
\\
\\
More importantly, the model predicts systematic heterogeneity in the response to common shocks. Differentiating once more with respect to firm-level exposure yields

\begin{equation}
\frac{\partial^2 Y_{it}}
{\partial \rho_t \partial \phi_i}
=
\sum_{j=1}^{J_i}
f_{ijt}
\cdot
\mathbb{E}_t[\theta_{ijt}]
+
\sum_{j=1}^{J_i}
f'_{ijt}
\cdot
\phi_i
\cdot
\mathbb{E}_t[\theta_{ijt}]^2.
\end{equation}

Equation (12) is the central comparative static result of the model. The first term captures the direct exposure effect. Firms characterized by greater exposure experience larger increases in resignation intensity following a common increase in reputational pressure. This component alone implies heterogeneous organizational responses to a common shock.
\\
\\
The second term captures nonlinear amplification arising from the curvature of the resignation hazard. Because resignation decisions are threshold-based, directors positioned near the margin of indifference may respond disproportionately to relatively small changes in expected costs. As a consequence, organizational responses need not be linear. In some firms, modest increases in reputational pressure may generate little change in turnover, whereas in others similar increases may trigger substantial waves of resignations.
\\
\\
The nonlinear component is particularly important in the context of the \#MeToo movement. Directors differ substantially in their outside opportunities, career concerns, reputational capital, and expectations regarding future scrutiny. Consequently, the same aggregate informational shock may generate very different organizational responses depending on the distribution of directors relative to their resignation thresholds. This mechanism provides a theoretical rationale for the substantial heterogeneity observed across firms and sectors in the empirical analysis.
\\
\\
An additional implication concerns the role of belief updating. Combining equations (7) and (12) reveals that increases in informational precision magnify the response to reputational shocks by increasing the weight directors assign to adverse public information. Thus, the \#MeToo movement affects resignation behavior not only because misconduct becomes more costly but also because available information becomes more informative. The interaction of these two channels generates a larger organizational response than would arise from either channel in isolation. The preceding discussion can be summarized formally.

\begin{proposition}[Exposure Amplification]
Suppose that directors assign positive probability to latent organizational vulnerability, such that
\[
\mathbb{E}_t[\theta_{ijt}]>0
\]
for all relevant directors. Then an increase in aggregate reputational pressure raises firm-level resignation intensity. Moreover, the magnitude of this effect is increasing in firm-level exposure.
\end{proposition}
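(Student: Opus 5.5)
The plan is to work directly from the logistic representation of the hazard in equation (9) and the aggregation in equation (10). Write the index as $z_{ijt}=\phi_i\rho_t\mathbb{E}_t[\theta_{ijt}]-B_{ijt}+C^{0}_{ijt}$, so that $h_{ijt}=\Lambda(z_{ijt})$ with $\Lambda(z)=1/(1+e^{-z})$. I will treat $\rho$ and $\phi_i$ as continuous parameters. The discrete break $\rho_0\to\rho_1$ in equation (4) is then handled at the end by integrating derivatives along $[\rho_0,\rho_1]$.

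Two structural facts are needed and should be stated explicitly. First, $B_{ijt}$ and $C^0_{ijt}$ do not depend on $\rho_t$ or $\phi_i$. Second, by equation (7), $\mathbb{E}_t[\theta_{ijt}]$ depends only on $(a_{it},b_{it},\kappa_t,s_t)$, and not on $\rho_t$ or $\phi_i$. I will also use $\phi_i\ge 0$ (exposure is a nonnegative intensity) and $0<\mathbb{E}_t[\theta_{ijt}]\le 1$ (positivity is the hypothesis; the upper bound holds because $\theta_{ijt}\in[0,1]$).

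\emph{Step 1 (monotonicity in $\rho$).} By the chain rule, $\partial Y_{it}/\partial\rho=\sum_j\Lambda'(z_{ijt})\,\phi_i\,\mathbb{E}_t[\theta_{ijt}]$, which is equation (11). Since $\Lambda'=\Lambda(1-\Lambda)>0$ everywhere, every summand is nonnegative, and the sum is strictly positive whenever $\phi_i>0$. Integrating over $[\rho_0,\rho_1]$ then gives $Y_{it}(\rho_1)>Y_{it}(\rho_0)$, which is the first claim of the proposition.

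\emph{Step 2 (exposure amplification).} Differentiating again with respect to $\phi_i$, and using $\Lambda''=\Lambda'(1-2\Lambda)$, gives
\[
\frac{\partial^2 Y_{it}}{\partial\rho\,\partial\phi_i}=\sum_{j}\Lambda'(z_{ijt})\,\mathbb{E}_t[\theta_{ijt}]\Bigl(1+(1-2h_{ijt})\,\phi_i\rho\,\mathbb{E}_t[\theta_{ijt}]\Bigr).
\]
This is equation (12), except that the curvature term correctly carries the factor $\rho$. The first term is the strictly positive direct exposure effect. The second, curvature term has the sign of $1-2h_{ijt}$.

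\emph{Main obstacle: signing the curvature term.} This step is where the argument can fail without extra assumptions. If some director has hazard above $1/2$ and $\phi_i\rho\,\mathbb{E}_t[\theta_{ijt}]$ is large, the bracket can become negative. I will first argue the empirically relevant case: monthly resignation hazards satisfy $h_{ijt}\le 1/2$, so $1-2h_{ijt}\ge 0$ and the bracket is at least $1$. Failing that, I will use the weaker sufficient condition $\phi_i\rho\,\mathbb{E}_t[\theta_{ijt}]<1$. Because $|1-2h|\le 1$, this condition alone makes the bracket strictly positive regardless of the hazard level.

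I will state the proposition as holding under one of these regularity conditions, imposed on every director $j$ for all $\rho\in[\rho_0,\rho_1]$. Under either condition the cross-partial is positive throughout the interval. Integrating $\partial^2 Y/\partial\rho\,\partial\phi$ over $[\rho_0,\rho_1]$ then shows that the treatment effect $Y_{it}(\rho_1)-Y_{it}(\rho_0)$ is increasing in $\phi_i$, which is the second claim.
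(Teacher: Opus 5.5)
You follow the same route as the paper, whose proof also reads the first claim off equation (11) and the second off the cross-partial (12). The difference is that the paper simply asserts that (12) is positive ``through both the direct and nonlinear channels.'' You instead compute the curvature term correctly, restoring the factor $\rho_t$ that (12) drops, and note that it carries the sign of $1-2h_{ijt}$.

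Your extra condition is necessary, not a weakness: without it the second claim fails. The marginal effect $\partial Y_{it}/\partial\rho_t=\sum_j\Lambda'(z_{ijt})\,\phi_i\,\mathbb{E}_t[\theta_{ijt}]$ is zero at $\phi_i=0$ and positive for $\phi_i>0$. It also tends to zero as $\phi_i\to\infty$, because $\Lambda'$ decays exponentially, so it cannot be increasing in exposure everywhere. Concretely, take one director with $B_{ijt}=C^0_{ijt}$, $\mathbb{E}_t[\theta_{ijt}]=1/2$, $\rho_t=1$ and $\phi_i=8$; your bracket is then $1+4(1-2\Lambda(4))\approx-2.86$.

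So your version is the correct form of the proposition, stated under either hazards at most $1/2$ or $\phi_i\rho_t\mathbb{E}_t[\theta_{ijt}]<1$. Your integration over $[\rho_0,\rho_1]$ also properly links the derivative statements to the discrete break in (4), which the paper leaves implicit.

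Two small remarks. First, $h_{ijt}$ and $\phi_i\rho_t\mathbb{E}_t[\theta_{ijt}]$ are nondecreasing in both $\rho_t$ and $\phi_i$ for nonnegative values. Either condition therefore need only be checked at $\rho_1$ and at the largest exposure level being compared. Second, as in the paper's proof, the first claim needs $\phi_i>0$, which the stated hypothesis omits.
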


\noindent
\textit{Proof.}
Equation (11) implies that
\[
\frac{\partial Y_{it}}
{\partial \rho_t}
>0
\]
whenever
\[
f_{ijt}>0,
\qquad
\phi_i>0,
\qquad
\mathbb{E}_t[\theta_{ijt}]>0.
\]
Equation (12) further shows that the marginal effect of reputational pressure is increasing in exposure through both the direct and nonlinear channels. \hfill $\square$
\\
\\
The proposition provides the fundamental theoretical foundation for the empirical design. The model predicts that a common reputational shock should not affect all firms equally. Instead, firms characterized by greater pre-existing exposure should experience systematically larger increases in resignation intensity. This prediction corresponds directly to the continuous-treatment difference-in-differences specification estimated below, where pre-shock disclosure intensity serves as an observable proxy for the latent exposure parameter $\phi_i$.
\\
\\
The comparative statics also generate an important prediction regarding dynamics. Because belief updating is gradual and organizational responses are nonlinear, turnover need not occur immediately following the shock. Instead, organizational adjustment may unfold over multiple periods as directors revise beliefs and reassess the expected costs of continued affiliation. This prediction provides a direct theoretical rationale for the dynamic treatment effects and persistence patterns examined in the empirical analysis.

\subsection{Strategic Interaction and Equilibrium Effects}

The analysis thus far treats directors as independent decision-makers who respond exclusively to their own beliefs and incentives. While this assumption is useful for establishing the baseline mechanism, it abstracts from an important feature of organizational behavior. In practice, resignation decisions are often interdependent. The departure of one director may itself constitute information regarding the state of the organization, thereby affecting the incentives of remaining directors.
\\
\\
This possibility is particularly relevant in the context of reputational shocks. Following a major governance scandal or public allegation, stakeholders frequently interpret subsequent resignations as signals regarding latent organizational conditions. A director's departure may therefore increase scrutiny of remaining board members, alter expectations regarding future investigations, or raise concerns regarding information that has not yet become publicly available. In such environments, organizational adjustment may become self-reinforcing.
\\
\\
To capture this mechanism, we introduce strategic complementarities into directors' cost functions. Specifically, suppose that reputational costs depend not only on aggregate reputational pressure and perceived organizational vulnerability, but also on the resignation decisions of peers:

\begin{equation}
C_{ijt}
=
C_{ijt}^{0}
+
\phi_i\rho_t\mathbb{E}_t[\theta_{ijt}]
+
\gamma
\cdot
\frac{1}{J_i}
\sum_{k\neq j}
R_{ikt},
\end{equation}

where $\gamma \geq 0$ measures the strength of strategic interaction.

The final term captures the idea that peer departures generate additional reputational pressure. When $\gamma=0$, directors behave independently and the model collapses to the baseline framework developed above. When $\gamma>0$, the resignation of one director increases the expected cost of remaining for other directors. The parameter therefore measures the extent to which organizational turnover is self-reinforcing.
\\
\\
Several economic interpretations are possible. First, peer departures may reveal information regarding latent organizational conditions. Second, departures may attract additional scrutiny from investors, regulators, employees, or media organizations. Third, resignations may weaken confidence in existing governance structures, increasing uncertainty regarding future organizational stability. Although these mechanisms differ in their institutional details, they share a common implication: departures increase the incentive for additional departures.
\\
\\\\
\\
Substituting equation (13) into the resignation decision yields an equilibrium hazard rate of the form

\begin{equation}
h_{ijt}
=
F
\left(
\phi_i\rho_t\mathbb{E}_t[\theta_{ijt}]
+
\gamma\bar{h}_{it}
-
\kappa_{ijt}
\right),
\end{equation}

where

\[
\bar{h}_{it}
=
\frac{1}{J_i}
\sum_{k\neq j}
h_{ikt}
\]

denotes the average resignation hazard among peers and $\kappa_{ijt}$ collects all remaining determinants of the director's continuation value.
\\
\\
Equation (14) introduces an important feedback mechanism. In the baseline model, aggregate reputational pressure affects resignation behavior directly. In the extended model, reputational pressure affects resignation behavior both directly and indirectly through its effect on the behavior of other directors. Consequently, a common informational shock may generate organizational responses substantially larger than the sum of individual direct effects.
\\
\\
To illustrate this mechanism, consider a symmetric equilibrium in which directors face similar incentives and beliefs. Let

\[
h_{ijt}=h_{it}
\]

for all directors within firm $i$. The equilibrium condition then becomes

\begin{equation}
h_{it}
=
F
\left(
\phi_i\rho_t\mathbb{E}_t[\theta_{it}]
+
\gamma h_{it}
-
\kappa_{it}
\right).
\end{equation}

Differentiating implicitly with respect to aggregate reputational pressure yields

\begin{equation}
\frac{\partial h_{it}}
{\partial \rho_t}
=
\frac{
F'(\cdot)
\phi_i
\mathbb{E}_t[\theta_{it}]
}
{
1-\gamma F'(\cdot)
}.
\end{equation}

Equation (16) highlights the amplification effect generated by strategic interaction. Relative to the baseline model, the response to reputational pressure is multiplied by

\[
\frac{1}
     {1-\gamma F'(\cdot)}.
\]

Whenever

\[
0<\gamma F'(\cdot)<1,
\]

the denominator is less than one and the effect of reputational pressure is amplified. The stronger the strategic complementarity, the larger the organizational response to a given informational shock. The preceding result can be summarized formally.

\begin{proposition}[Organizational Amplification]
Suppose that resignation decisions exhibit strategic complementarities, such that $\gamma>0$. Then the equilibrium response of resignation hazards to an increase in aggregate reputational pressure exceeds the corresponding response in the absence of strategic interaction.
\end{proposition}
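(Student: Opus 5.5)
The plan is to compare the implicit-function derivative in equation (16) with its $\gamma=0$ counterpart, both evaluated at the same equilibrium index. First I make the setup precise. Write the symmetric equilibrium condition (15) as $G(h,\rho)\equiv h - F\bigl(\phi_i\rho\,\mathbb{E}_t[\theta_{it}]+\gamma h-\kappa_{it}\bigr)=0$. Here $F$ is the logistic cdf, so $F'>0$ everywhere. I take as maintained hypotheses those of the Exposure Amplification proposition: $\phi_i>0$ and $\mathbb{E}_t[\theta_{it}]>0$. I also add the stability condition $\gamma F'(\cdot)<1$ at the equilibrium under consideration. This condition is implicit in the paper's discussion after (16). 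For the logistic case it holds automatically whenever $\gamma<4$, since $\sup F'=1/4$.

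Next I apply the implicit function theorem. We have $\partial G/\partial h = 1-\gamma F'(\cdot)$, which is strictly positive by stability, so $h_{it}$ is locally a $C^1$ function of $\rho_t$. The theorem then gives exactly (16),
\[
\frac{\partial h_{it}}{\partial\rho_t}=\frac{F'(\cdot)\,\phi_i\,\mathbb{E}_t[\theta_{it}]}{1-\gamma F'(\cdot)}.
\]
The no-interaction benchmark is the partial response holding peer hazards fixed. Equivalently, it is the $\gamma=0$ formula $F'(\cdot)\phi_i\mathbb{E}_t[\theta_{it}]$, as in equation (11) at the individual level.

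The comparison is then immediate. The numerator is strictly positive, and for $\gamma>0$ we have $0<1-\gamma F'<1$. So the equilibrium response equals the direct response times $1/(1-\gamma F')>1$, and it is therefore strictly larger. Subtracting the two gives the amplification term $\gamma F'^2\phi_i\mathbb{E}_t[\theta_{it}]/(1-\gamma F')>0$.

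The main obstacle is making the comparison meaningful, because $F'$ is evaluated at different indices in the two models: the equilibrium $h_{it}$ itself shifts when $\gamma$ changes. I will resolve this by stating the claim as the paper intends. The total equilibrium response is compared with the direct (partial-equilibrium) response at the \emph{same} equilibrium point, so the feedback through peer hazards is what separates the two. Under this interpretation the multiplier argument is exact. If instead one wants a comparison across separate equilibria with $\gamma>0$ and $\gamma=0$, I would note that it requires extra curvature conditions on $F$. I would therefore flag that stronger version as outside the proposition's scope. A secondary point is uniqueness or selection when multiple equilibria exist. Here I restrict attention to stable equilibria, meaning those satisfying $\gamma F'<1$, which are the ones selected by best-response dynamics.
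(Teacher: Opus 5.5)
Your proposal is correct and follows the same route as the paper: implicitly differentiate the symmetric equilibrium condition (15) to obtain (16), then note that under $0<\gamma F'(\cdot)<1$ the multiplier $(1-\gamma F'(\cdot))^{-1}$ exceeds one relative to the direct response $F'(\cdot)\phi_i\mathbb{E}_t[\theta_{it}]$. Your explicit hypotheses ($\phi_i>0$, $\mathbb{E}_t[\theta_{it}]>0$, the stability condition, and comparing both responses at the same equilibrium index) are exactly what the paper's proof uses implicitly. Your caveat is also well placed: comparing separate $\gamma>0$ and $\gamma=0$ equilibria can genuinely reverse the ranking under the logistic, for example when the index is large and positive.
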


\noindent
\textit{Proof.}
Equation (16) implies that

\[
\frac{\partial h_{it}}
{\partial \rho_t}
=
\frac{
F'(\cdot)
\phi_i
\mathbb{E}_t[\theta_{it}]
}
{
1-\gamma F'(\cdot)
}.
\]

Because

\[
0<\gamma F'(\cdot)<1,
\]

the amplification factor

\[
(1-\gamma F'(\cdot))^{-1}
\]

exceeds unity. Therefore, the response to reputational pressure is larger than in the baseline model without strategic interaction. \hfill $\square$
\\
\\
The proposition has an important empirical implication. Organizational turnover should not be interpreted solely as a collection of independent resignation decisions. Instead, departures may trigger additional departures by increasing scrutiny, revealing information, or altering beliefs regarding organizational stability. Consequently, aggregate resignation flows may display persistence, clustering, and nonlinear adjustment following large-scale informational shocks.
\\
\\
This mechanism is particularly relevant in the context of the \#MeToo movement. The initial increase in reputational pressure may induce a subset of directors to resign. Those departures then become additional signals observed by stakeholders and remaining directors, generating further organizational adjustment. The result is a dynamic process in which turnover propagates through the organization over time rather than occurring as a single contemporaneous response.
\\
\\
The strategic interaction mechanism therefore provides a theoretical explanation for two central empirical patterns examined below. First, treatment effects may persist well beyond the initial shock period. Second, resignation responses may exhibit substantial heterogeneity across firms and sectors, reflecting differences not only in exposure but also in the strength of organizational amplification mechanisms.

\subsection{Mapping to the Empirical Specification}

The final step is to connect the theoretical framework to the empirical design. The model implies that firm-level resignation intensity is determined by three fundamental objects: organizational exposure, aggregate reputational pressure, and informational precision. Formally,

\begin{equation}
Y_{it}
=
f(\phi_i,\rho_t,\kappa_t),
\end{equation}

where $\phi_i$ denotes firm-level exposure, $\rho_t$ captures the economy-wide reputational environment, and $\kappa_t$ governs the precision of publicly available information.
\\
\\
The \#MeToo movement generates simultaneous shifts in both $\rho_t$ and $\kappa_t$. As shown above, these changes increase expected reputational costs and alter directors' beliefs regarding latent organizational vulnerability. Because the model is nonlinear, the exact response of resignation intensity depends on the interaction between the shock and firm-level exposure.
\\
\\
To derive an empirically tractable representation, consider a first-order approximation around the pre-shock equilibrium. Let

\[
D_t
=
\mathbf{1}\{t\geq T_0\}
\]

denote an indicator for the post-\#MeToo period. A local approximation implies

\begin{equation}
\Delta Y_{it}
\approx
\beta
\cdot
\phi_i
\cdot
D_t
+
\eta_{it},
\end{equation}

where

\begin{equation}
\beta
=
\mathbb{E}
\left[
\frac{\partial^2 Y_{it}}
{\partial \rho_t \partial \phi_i}
\right]
\end{equation}

summarizes the average exposure-gradient of organizational responses to the shock. The coefficient $\beta$ has a natural economic interpretation. It measures the extent to which the effect of a common increase in reputational pressure depends on organizational exposure. In other words, $\beta$ captures how much more strongly highly exposed firms respond to the same informational shock relative to less exposed firms. The parameter therefore corresponds directly to the central comparative static derived in the previous subsection.
\\
\\
An important implication is that the empirical design differs conceptually from a conventional binary-treatment difference-in-differences framework. The \#MeToo movement affected the entire corporate sector and therefore does not generate a natural untreated control group. Instead, identification arises from differential exposure to a common shock. The relevant comparison is not between treated and untreated firms but between firms characterized by different levels of pre-existing vulnerability to changes in the reputational environment.
\\
\\
The theoretical parameter $\phi_i$ is not directly observable. In the empirical analysis, we therefore replace $\phi_i$ with an observable proxy, denoted $E_i$, constructed using Item 5.02 SEC Form 8-K disclosure intensity measured prior to October 2017. Because the exposure measure is calculated entirely from pre-shock disclosures, it captures predetermined organizational characteristics rather than post-shock responses. Substituting the empirical proxy for the latent exposure parameter yields

\begin{equation}
\Delta Y_{it}
\approx
\beta
\left(
E_i
\cdot
D_t
\right)
+
\eta_{it},
\end{equation}

which corresponds directly to the baseline empirical specification estimated below. This mapping provides a structural interpretation of the empirical coefficient. The interaction term does not merely capture differential turnover across firms. Rather, it estimates the extent to which pre-existing organizational exposure amplifies the effect of a common reputational-information shock on resignation behavior. Consequently, the empirical specification can be interpreted as a reduced-form approximation to the comparative statics generated by the theoretical model.
\\
\\
The mapping also clarifies why the empirical analysis emphasizes dynamic treatment effects. Because belief updating, organizational learning, internal investigations, and strategic complementarities unfold over time, the model does not predict that all adjustment should occur immediately following October 2017. Instead, exposure-sensitive responses may emerge gradually as directors revise beliefs and reassess the expected costs of continued affiliation. This prediction motivates the dynamic event-study and matrix-completion analyses presented below.

\subsection{Interpretation and Empirical Implications}

The theoretical framework developed above provides a structural interpretation of the empirical specification and clarifies the economic mechanism underlying the estimated effects. In particular, the model implies that organizational turnover is determined by the interaction between three fundamental forces: firm-level exposure to governance-sensitive reputational risk, aggregate reputational pressure, and directors' beliefs regarding latent organizational vulnerability. The \#MeToo movement affects behavior by simultaneously increasing the reputational price attached to adverse information and increasing the precision with which such information is interpreted.
\\
\\
A key implication is that the empirical coefficient of interest has a natural structural interpretation. The interaction between exposure and the post-\#MeToo period captures the cross-partial derivative

\[
\frac{\partial^2 Y_{it}}
{\partial \rho_t \partial \phi_i},
\]

which measures how strongly organizational turnover responds to a common increase in reputational pressure as firm-level exposure rises. The coefficient therefore does not simply reflect higher turnover among exposed firms. Rather, it measures the extent to which exposure amplifies the organizational response to an economy-wide reputational-information shock. In this sense, the empirical specification can be interpreted as a reduced-form approximation to the central comparative static generated by the model.
\\
\\
The framework also clarifies why a continuous-treatment design is the appropriate empirical strategy. Unlike conventional policy interventions, the \#MeToo movement did not affect a subset of firms while leaving others untreated. The shock altered the informational environment facing the entire corporate sector. Consequently, identification cannot rely on a simple treated-versus-control comparison. Instead, the relevant source of variation arises from differential exposure to a common shock. Firms characterized by greater ex ante vulnerability should experience larger increases in resignation intensity than firms characterized by lower exposure. The empirical analysis therefore estimates an exposure gradient rather than an average treatment effect in the conventional binary-treatment sense.
\\
\\
The model further explains why first differences isolate the relevant behavioral response. Organizational turnover reflects both persistent firm-specific characteristics and time-varying responses to changes in the informational environment. Differencing removes time-invariant organizational features and focuses attention on changes in resignation behavior associated with the shift in reputational pressure induced by the \#MeToo movement. The resulting estimates therefore capture adjustment dynamics rather than permanent differences in governance structures across firms.
\\
\\
Perhaps most importantly, the framework implies that observed empirical effects need not correspond solely to direct responses to the initial shock. The combination of Bayesian belief updating and strategic complementarities generates endogenous amplification. Directors revise beliefs as information becomes more salient and more informative, while peer departures create additional incentives for organizational exit. Consequently, observed turnover reflects both the direct effect of increased reputational pressure and the indirect effect of organizational feedback mechanisms. This distinction is important because it implies that large and persistent resignation flows need not require large initial shocks. Even modest changes in beliefs may generate substantial organizational adjustment when amplified through endogenous responses within the firm's governance structure.
\\
\\
The theory also provides a natural explanation for heterogeneous treatment effects across firms and sectors. Exposure varies across organizations, informational environments differ across industries, and strategic complementarities may be stronger in some governance structures than in others. As a consequence, identical changes in aggregate reputational pressure need not generate identical organizational responses. The model therefore predicts substantial variation in resignation dynamics across firms, industries, and organizational environments, a prediction examined directly in the empirical analysis.
\\
\\
Taken together, the framework generates four testable predictions. First, firms characterized by greater pre-shock exposure should experience larger increases in resignation intensity following the \#MeToo shock. Second, treatment effects should persist over time because organizational learning, belief updating, and internal investigations unfold gradually. Third, effects should appear across multiple organizational margins, including voluntary exits, forced departures, board turnover, executive turnover, and allegation-related disclosures. Fourth, responses should exhibit substantial heterogeneity across sectors and organizational environments, reflecting differences in exposure, reputational sensitivity, and the strength of strategic interaction.
\\
\\
These predictions provide the foundation for the empirical analysis that follows. The results section therefore does not simply test whether the \#MeToo movement affected organizational turnover. Rather, it evaluates whether the patterns of organizational adjustment observed in the data correspond to the behavioral mechanisms implied by the theoretical framework. In this sense, the empirical estimates can be interpreted as evidence regarding the underlying process through which large-scale reputational-information shocks propagate through corporate governance structures and influence executive and board turnover.

\section{Empirical Strategy and Identification}

The theoretical framework developed in the previous section implies that the \#MeToo movement operated as a common reputational-information shock whose effects varied systematically with firms' pre-existing exposure to governance-sensitive reputational risk. This prediction motivates the empirical strategy adopted in the paper. Because the shock affected the entire corporate sector simultaneously, identification cannot rely on a conventional treated-versus-control comparison. Instead, the analysis exploits heterogeneity in pre-shock exposure and examines whether firms characterized by greater vulnerability experienced larger increases in resignation intensity following October 2017.
\\
\\
To evaluate this prediction, we employ a sequence of complementary estimators. We begin with a continuous-treatment difference-in-differences design that directly corresponds to the comparative statics derived in Section 3. We then examine dynamic adjustment using event-study methods, reconstruct counterfactual turnover trajectories through matrix completion, and assess statistical significance using both clustered and wild-bootstrap inference procedures. Taken together, these approaches provide a comprehensive evaluation of the organizational consequences of the \#MeToo movement under increasingly flexible assumptions regarding the evolution of untreated outcomes.

\subsection{Identification under a Common Reputational Shock}

The central empirical challenge arises from the nature of the treatment itself. Unlike conventional policy interventions that affect a subset of units while leaving others untreated, the \#MeToo movement constituted a common informational and reputational shock that affected the entire corporate sector simultaneously. Beginning with the October 2017 Weinstein revelations, the informational environment surrounding corporate governance changed abruptly. Allegations of misconduct became more salient, stakeholder scrutiny intensified, and the expected reputational cost of governance failures increased. Because this shock was economy-wide, there exists no natural untreated group against which exposed firms can be compared.
\\
\\
The theoretical framework developed in the previous section provides a solution to this identification problem. The model predicts that firms differ systematically in their sensitivity to common reputational shocks through the exposure parameter $\phi_i$. Although all firms experienced the same increase in reputational pressure, highly exposed firms should exhibit larger increases in resignation intensity than firms characterized by lower exposure. Identification therefore derives from heterogeneous exposure to a common shock rather than from differential treatment assignment.
\\
\\
This distinction is important. The empirical objective is not to estimate whether the \#MeToo movement affected corporate governance in general. Given the economy-wide nature of the shock, such a comparison is impossible. Instead, the objective is to estimate whether firms characterized by greater ex ante exposure experienced systematically larger changes in resignation behavior following the shock. The resulting estimand is therefore an exposure gradient rather than a conventional average treatment effect.
\\
\\
Let $Y_{it}$ denote firm-level resignation intensity. Consistent with the theoretical framework, the outcome of interest is the change in resignation behavior rather than the level of resignations. Firms differ substantially in baseline turnover rates due to persistent organizational characteristics unrelated to the \#MeToo movement. Focusing on changes in resignation intensity isolates organizational adjustment associated with the shift in the informational environment while reducing the influence of persistent cross-sectional heterogeneity.
\\
\\
Formally, define

\begin{equation}
\Delta Y_{it}=Y_{it}-Y_{i,t-1}.
\end{equation}

The exposure measure is constructed entirely from information observed before October 2017. Specifically, let $8K502_{it}$ denote governance-related Item 5.02 Form 8-K disclosures and define

\begin{equation}
E_i=
\frac{1}{T_0-1}
\sum_{t<T_0}
8K502_{it}.
\end{equation}

By construction, $E_i$ is predetermined with respect to the \#MeToo shock and therefore cannot be affected by post-treatment organizational responses.
\\
\\
The theoretical model implies that the causal effect of the shock should be increasing in exposure:

\begin{equation}
\frac{\partial}
{\partial E_i}
\mathbb{E}
\left[
\Delta Y_{it}(1)
-
\Delta Y_{it}(0)
\right]
>0.
\end{equation}

The empirical analysis below evaluates this prediction using a sequence of estimators that progressively relax identifying assumptions.

\subsection{Baseline Continuous-Treatment Difference-in-Differences}

The theoretical framework predicts that firms characterized by greater exposure should experience larger increases in resignation intensity following the \#MeToo shock. A natural starting point is therefore a continuous-treatment difference-in-differences framework.
\\
\\
Let $Y_{it}$ denote resignation intensity measured as the logarithm of one plus the total number of board and executive departures in firm $i$ during month $t$. The logarithmic transformation is motivated by both theoretical and empirical considerations. From a theoretical perspective, the model predicts changes in the flow of resignations rather than a binary departure event. From an empirical perspective, resignation counts are highly skewed and characterized by occasional episodes of clustered turnover.
\\
\\
The baseline model specification is

\begin{equation}
Y_{it}
=
\alpha_i
+
\lambda_t
+
\beta
(E_i \times D_t)
+
\mathbf{X}_{it}'\Gamma
+
\varepsilon_{it},
\end{equation}

where $\alpha_i$ denotes firm fixed effects, $\lambda_t$ denotes month fixed effects, $E_i$ represents pre-shock exposure, and $D_t$ is an indicator equal to one after October 2017.
\\
\\
The coefficient of interest is $\beta$, which captures the differential response of resignation intensity to the \#MeToo shock as exposure increases. The specification follows directly from the comparative statics developed in Section 3:

\begin{equation}
\frac{\partial^2Y_{it}}
{\partial\rho_t\partial\phi_i}
>0.
\end{equation}

Firm fixed effects absorb time-invariant organizational characteristics, while month fixed effects absorb aggregate shocks affecting all firms simultaneously. Identification therefore relies on differential changes in resignation behavior across firms with different levels of predetermined exposure.
\\
\\
To account for persistence in organizational turnover, the preferred specification augments the baseline model with lagged resignation intensity:

\begin{equation}
Y_{it}
=
\alpha_i
+
\lambda_t
+
\beta(E_i \times D_t)
+
\sum_{k=1}^{4}
\rho_kY_{i,t-k}
+
\varepsilon_{it}.
\end{equation}

The validity of this design relies on a generalized parallel-trends assumption under which firms characterized by different levels of exposure would have followed similar turnover trajectories in the absence of the shock.

\subsection{Dynamic Event-Study Analysis}

While the baseline specification provides an estimate of the average organizational response, it cannot reveal how that response evolves over time. The theoretical framework predicts dynamic adjustment arising from Bayesian learning, internal investigations, governance adaptation, and strategic complementarities.
\\
\\
To examine these dynamics, we estimate

\begin{equation}
Y_{it}
=
\alpha_i
+
\lambda_t
+
\sum_{k\neq -1}
\beta_k
\left(
E_i
\times
\mathbf{1}\{t-T_0=k\}
\right)
+
\sum_{p=1}^{4}
\rho_pY_{i,t-p}
+
\varepsilon_{it}.
\end{equation}

The coefficients $\beta_k$ trace the evolution of resignation intensity before and after the shock as a function of exposure. The event-study framework serves two purposes. First, it provides a diagnostic assessment of the identifying assumptions underlying the continuous-treatment design. Specifically, the generalized parallel-trends assumption implies

\begin{equation}
\beta_k=0
\qquad
\forall k<0.
\end{equation}

Second, the event-study provides a direct test of the dynamic predictions generated by the theoretical framework. Bayesian updating and organizational amplification imply that treatment effects should persist beyond the initial shock period rather than dissipating immediately.

\subsection{Dynamic Matrix Completion}

Although the difference-in-differences and event-study specifications provide useful evidence, both ultimately rely on assumptions regarding the evolution of untreated outcomes. To relax these assumptions, we complement the baseline analysis with dynamic matrix completion methods.
\\
\\
Following \citet{athey2021matrix}, outcomes are assumed to satisfy

\begin{equation}
Y_{it}
=
L_{it}
+
\tau_{it}W_{it}
+
u_{it},
\end{equation}

where $L_{it}$ denotes the latent untreated outcome process and $\tau_{it}$ denotes the treatment effect. The key identifying assumption is that untreated outcomes are generated by a low-rank latent structure:

\begin{equation}
\text{rank}(L)
\ll
\min(N,T).
\end{equation}

Matrix completion estimates the latent outcome matrix by solving

\begin{equation}
\widehat{L}
=
\arg\min_{L}
\left\{
\sum_{(i,t)\in\Omega}
(Y_{it}-L_{it})^2
+
\lambda\|L\|_*
\right\},
\end{equation}

where $\|L\|_*$ denotes the nuclear norm. The estimator is particularly attractive in the present setting because it allows untreated outcomes to evolve flexibly while accommodating latent time-varying confounders. This feature is especially important when analyzing organizational outcomes such as executive turnover, which may be influenced by unobserved governance processes and managerial dynamics.
\\
\\
Dynamic treatment effects are computed as

\begin{equation}
\widehat{\tau}_k
=
\frac{1}{N_T}
\sum_{i\in T}
\left[
Y_{i,T_0+k}
-
\widehat{Y}_{i,T_0+k}(0)
\right].
\end{equation}

These estimates provide a direct assessment of whether the persistent adjustment patterns predicted by the theoretical framework are supported by the data.

\subsection{Statistical Inference}

A central concern in panel-data settings involving organizational outcomes is the reliability of statistical inference. Executive and board turnover decisions frequently exhibit persistence, clustering, and serial dependence, potentially causing conventional standard errors to understate uncertainty. To address these concerns, all baseline specifications employ firm-clustered standard errors. Clustering at the firm level permits arbitrary forms of heteroskedasticity and serial correlation within firms while maintaining independence across firms.

The cluster-robust variance estimator is

\begin{equation}
\widehat{\mathrm{Var}}(\widehat{\beta})
=
(X'X)^{-1}
\left(
\sum_{g=1}^{G}
X_g'
\widehat{u}_g
\widehat{u}_g'
X_g
\right)
(X'X)^{-1}.
\end{equation}

Because recent research has demonstrated that conventional clustered inference may perform poorly in finite samples, we additionally employ wild-cluster bootstrap procedures. These methods generate empirical sampling distributions through repeated cluster-level resampling and provide more reliable inference under residual dependence.
\\
\\
Bootstrap inference is implemented using multiple replication schemes, including 100, 200, 500, and 1,000 replications. Examining a range of bootstrap procedures allows assessment of the stability of estimated sampling distributions and ensures that statistical significance is not driven by a particular implementation.
\\
\\
Throughout the paper, statistical significance is therefore evaluated using a hierarchy of increasingly conservative procedures. The consistency of conclusions across clustered and bootstrap-based inference provides additional confidence that the estimated effects reflect genuine organizational responses rather than finite-sample artifacts or misspecified variance structures.

\section{Data and Samples}

Our analysis based on an original, hand-collected dataset of leadership-change disclosures filed by S\&P 500 firms between 2016 and 2023. The choice of source follows from institutional design. Under SEC rules, departures of directors and senior officers are typically reported through Item 5.02 of Form 8-K, which provides a standardized disclosure channel for leadership turnover across public companies. At the same time, Item 5.02 disclosures are substantively sparse: firms usually disclose the fact and date of a departure, but frequently omit the underlying reason or describe it in vague terms such as ``personal reasons.'' The combination of formal standardization and substantive opacity makes Item 5.02 filings particularly well-suited for studying changes in resignation behavior even when the precise reason for a departure remains hidden.
\\
\\
The raw data consist of all relevant Item 5.02 filings submitted by S\&P 500 firms during the sample period. In the broader departure corpus assembled from the filings, there are 3,841 observations involving director and senior officer departures. The relevant unit of observation is the firm-month bracket, and the outcome of interest is not the level of director and officer turnover itself, but the change in departure intensity. This follows directly from the identification strategy that we describe in detail below. Since the Weinstein saga in October 2017 constituted a common, economy-wide reputational shock rather than a treatment affecting only a subset of firms, identification cannot come from discrete comparison between treatment and control, but instead from whether firms with greater pre-shock exposure indeed experienced larger post-shock changes in departure flows than firms with lower exposure.  
\\
\\
Our key explanatory variable is firm-level exposure to governance-sensitive disclosure environments. Following the identification strategy, we measure exposure using pre-shock frequency of Item 5.02 filings. For each firm, the exposure variable is defined as the average monthly frequency of Item 5.02 filings before October 2017. Again, firms with higher pre-shock filing intensity are interpreted as firms whose governance structure were more exposed to disclosure-related reputational pressure, and hence as firms more likely to respond strongly when the informational environment shifted exogenously after October 2017. The empirical design compared firms with different values of exposure before and after the shock. This is implemented through the interaction of a post-\#MeToo indicator with the continuous exposure measure.

\subsection{Measuring Organizational Adjustment: Cumulative Resignations and Resignation Intensity}

A central empirical issue concerns the appropriate measurement of organizational turnover following the \#MeToo shock. Executive departures are relatively infrequent events and exhibit substantial heterogeneity across firms.\footnote{It should be noted that the analysis applies to any official, as well as to the firm's decision to terminate, rather than the director's decision to resign. Consequently, we use ``director'' and ``resignation'' terms for the sake of simplicity.} While many firms experience no observable turnover during a given period, others undergo repeated episodes of managerial restructuring involving multiple departures across different levels of the organizational hierarchy. As shown in the descriptive statistics, the distribution of resignation activity is highly skewed and characterized by substantial dispersion across firms and time.
\\
\\
To capture this variation, the analysis employs cumulative resignation activity transformed into logarithmic form as the primary outcome variable. Specifically, the dependent variable is defined as:
\[
Y_{it}=\log\left(1+\text{CumResignations}_{it}\right),
\]
where \(\text{CumResignations}_{it}\) denotes cumulative resignation activity for firm \(i\) in month \(t\), and one is added to preserve observations with zero resignation events.
\\
\\
Several considerations motivate this specification. First, cumulative resignation activity captures the broader process of organizational adjustment rather than isolated executive departures. Organizational restructuring often unfolds sequentially. The departure of one executive may trigger subsequent exits, board changes, internal investigations, or broader revisions in governance structures. Focusing exclusively on individual resignation events would therefore risk understating the extent of organizational responses following a major informational shock.
\\
\\
Second, the logarithmic transformation provides a natural measure of resignation intensity rather than absolute turnover counts. Raw resignation frequencies are highly right-skewed, with a relatively small number of firms experiencing unusually large numbers of departures. Estimating effects in levels would therefore allow a small subset of firms to exert disproportionate influence on parameter estimates. The logarithmic transformation reduces the influence of extreme observations while preserving meaningful variation across firms.
\\
\\
Third, the use of log resignation intensity permits an economically intuitive interpretation of estimated coefficients. Under this specification, estimated treatment effects approximately correspond to percentage changes in organizational turnover intensity rather than absolute changes in resignation counts. Such interpretations are particularly useful when comparing organizational responses across firms with different baseline turnover profiles and governance structures.
\\
\\
Finally, the dynamic framework employed throughout the paper incorporates lagged resignation intensity to account for persistence in organizational adjustment processes. Consequently, estimated treatment effects should not be interpreted as shifts in steady-state turnover levels. Instead, the coefficients capture differential changes in the intensity of organizational restructuring among firms with greater pre-treatment exposure following the \#MeToo shock. Taken together, this approach allows the empirical analysis to focus on the central object of interest, the extent to which the \#MeToo movement altered the pace and magnitude of organizational adjustment rather than simply increasing the probability of isolated executive departures.

\subsection{Item 5.02 Filings and Textual Analysis}

A central contribution of this study lies in the construction of a novel firm-level exposure index based on large-scale computational textual analysis of mandatory corporate disclosure filings. The empirical framework builds on a rapidly expanding literature combining advances in computational social science, machine learning, and natural language processing with causal inference in economics and finance. Recent work has increasingly emphasized the value of high-dimensional textual data for measuring latent institutional characteristics, organizational expectations, reputational risk, and governance quality that are difficult to observe directly through conventional accounting variables or survey-based indicators. Important contributions in this literature include \citep{gentzkow2019text} on text as data methods in economics, \citep{hoberg2016text} on textual product-market similarity, \citep{hassan2019political} on geopolitical risk extraction from corporate disclosures, \citep{baker2016uncertainty} on textual uncertainty measurement, and \cite{tetlock2007sentiment, tetlock2008fundamentals} on financial text and market behavior. The present study extends this broader research agenda by constructing a dynamic firm-level measure of organizational exposure to reputational and governance vulnerability using pre-treatment SEC disclosure text.
\\
\\
Unlike existing studies that rely on ex post classifications of scandal involvement, media coverage, or realized allegations, the empirical design developed here measures firms' ex ante exposure to reputational and governance vulnerability prior to the emergence of the \#MeToo shock in October 2017. The resulting measure captures latent organizational susceptibility to reputational disruption embedded in the informational structure of corporate disclosures before the Weinstein revelations transformed the equilibrium salience of workplace misconduct, governance accountability, and executive oversight. Unlike existing studies that rely on ex post classifications of scandal involvement, media coverage, or realized allegations, the empirical design developed here measures firms' ex ante exposure to reputational and governance vulnerability prior to the emergence of the \#MeToo shock in October 2017. The resulting measure captures latent organizational susceptibility to reputational disruption embedded in the informational structure of corporate disclosures before the Weinstein revelations transformed the equilibrium salience of workplace misconduct, governance accountability, and executive oversight.
\\
\\
The underlying dataset was assembled through a large-scale computer-assisted collection and coding exercise spanning all S\&P 500 firms between January 2016 and July 2024. The resulting data architecture combines methods from computational linguistics, natural language processing, information retrieval, and panel econometrics in order to transform unstructured disclosure text into structured measures of organizational exposure and governance vulnerability. In contrast to manually coded event-study datasets commonly used in corporate governance research, the present framework relies on automated parsing and classification procedures capable of processing large volumes of heterogeneous disclosure text at high temporal frequency. The raw data collection process required the systematic retrieval, parsing, classification, and organization of thousands of mandatory SEC Form 8-K disclosures filed by publicly listed corporations. Particular emphasis was placed on Item 5.02 filings, which contain mandatory disclosures regarding departures, appointments, resignations, removals, or transitions involving directors and senior executive officers. Item 5.02 disclosures occupy a particularly important institutional position within the architecture of U.S. securities regulation because they require firms to disclose material changes in executive leadership and board composition within a legally mandated reporting framework overseen by the Securities and Exchange Commission. The reporting requirement therefore creates a relatively standardized and high-frequency disclosure environment through which organizational restructuring, executive departures, governance disputes, monitoring failures, and leadership transitions become observable in near real time.
\\
\\
From an institutional perspective, Item 5.02 filings are especially valuable for empirical analysis because they sit at the intersection of corporate governance, securities regulation, and organizational accountability. Unlike voluntary press releases or media interviews, these disclosures are filed under formal SEC reporting obligations and therefore impose substantially stronger constraints on strategic omission or selective disclosure. The filings frequently contain detailed narrative explanations regarding the circumstances surrounding executive transitions, including references to disagreements with management, compliance issues, governance concerns, internal investigations, organizational restructuring, and board decisions. As a result, the textual environment embedded within Item 5.02 disclosures provides unusually rich information regarding the internal governance structure and organizational dynamics of publicly traded corporations.
\\
\\
The institutional role of Item 5.02 filings became particularly important following the emergence of the \#MeToo movement in October 2017. As governance scrutiny intensified and reputational accountability mechanisms strengthened, these filings increasingly served as formal disclosure channels through which firms communicated executive departures, leadership restructuring, and organizational responses to heightened public scrutiny. Consequently, the Item 5.02 disclosure environment provides a uniquely appropriate empirical setting for studying how reputational-information shocks propagate through corporate governance systems and translate into observable organizational adjustment. These filings are especially valuable because they represent legally mandated disclosures rather than voluntary public-relations communications, thereby substantially reducing concerns regarding selective disclosure or endogenous media amplification.
\\
\\
The data collection architecture combined automated textual extraction procedures with extensive human-assisted verification and classification. At the computational level, the pipeline involved document retrieval, textual normalization, tokenization, phrase extraction, contextual classification, semantic filtering, and metadata harmonization across filings and firms. The resulting corpus enabled the identification of textual patterns associated with executive turnover, internal organizational instability, governance restructuring, misconduct-related disclosures, reputational concerns, compliance language, legal-risk signaling, and managerial transitions. Particular attention was devoted to minimizing false-positive classifications generated by routine executive succession events unrelated to organizational disruption.
\\
\\
The exposure index was constructed using a high-dimensional textual environment derived from pre-treatment disclosure language. The classification framework was designed to capture not merely explicit references to misconduct, but broader latent dimensions of governance fragility and reputational sensitivity embedded within disclosure narratives. In this sense, the exposure index functions analogously to latent-feature extraction procedures increasingly employed in computational finance and machine-learning applications, where textual signals are used to recover otherwise unobservable institutional characteristics. Rather than treating disclosure text as unstructured narrative noise, the framework interprets the disclosure environment itself as an informational representation of organizational vulnerability. The initial stage involved automated retrieval and parsing of firm-level 8-K filings from SEC disclosure repositories. Each filing was subsequently processed using computer-coded textual analysis procedures designed to identify references to executive turnover, governance restructuring, allegations, reputational risk, organizational misconduct, compliance concerns, and related dimensions of corporate vulnerability. The coding framework was designed to distinguish ordinary executive transitions from organizational events plausibly associated with governance disruption, reputational pressure, monitoring failures, or internal institutional instability.
\\
\\
A particularly important feature of the dataset is that the exposure index was constructed exclusively from filings submitted before October 2017. This timing restriction is critical for identification. Because the textual exposure measure is entirely predetermined relative to the Weinstein revelations and the subsequent emergence of the \#MeToo movement, the index cannot mechanically reflect post-treatment organizational responses, resignations, allegations, or disclosure adjustments induced by the shock itself. Instead, the exposure index captures pre-existing latent susceptibility to reputational and governance disruption embedded in firms' disclosure histories prior to the shock.
\\
\\
Formally, let \begin{math} Exposure_i\end{math} denote the firm-level exposure index constructed from the corpus of pre-treatment 8-K disclosures. The index aggregates information extracted from the textual environment of firm disclosures submitted before October 2017 and therefore reflects ex ante governance and reputational vulnerability rather than realized post-shock organizational outcomes. The empirical strategy subsequently exploits cross-sectional variation in this predetermined exposure measure by interacting it with the post-Weinstein period indicator:

\begin{equation}
    Treatment_{it}=Exposure_i\times Post_t
\end{equation}

where \begin{math} Post_t\end{math} equals one for months following October 2017 and zero otherwise. Under this framework, identification derives not from differential treatment timing but from heterogeneous exposure to a common reputational-information shock. The resulting design differs fundamentally from conventional event-study approaches examining media scandals or corporate controversies. Methodologically, the framework is conceptually closer to recent work in computational economics that combines high-dimensional text representations with causal identification strategies in panel settings. The central empirical objective is not merely to identify whether resignations increased following October 2017, but rather to estimate how a large-scale reputational-information shock propagated through heterogeneous organizational structures characterized by different latent exposure profiles prior to the shock. This distinction is crucial because it shifts the analysis away from contemporaneous scandal realization and toward ex ante informational susceptibility.
\\
\\
From a computational perspective, the framework effectively converts large-scale unstructured disclosure text into a firm-level treatment-intensity measure that can subsequently be embedded within dynamic panel estimators. This architecture substantially expands the dimensionality of information available for empirical corporate-governance analysis relative to traditional approaches relying exclusively on accounting ratios, governance indices, or manually coded scandal indicators. In this sense, the paper contributes not only to the economics of organizational adjustment and reputational shocks, but also to the broader integration of computational methods into empirical political economy and corporate finance. Rather than comparing directly implicated firms with untreated firms after allegations become public, the present framework identifies whether firms characterized by greater ex ante textual exposure experienced systematically different post-shock organizational adjustment dynamics relative to firms with lower exposure. In this sense, the design is conceptually closer to models of heterogeneous susceptibility to common informational shocks, reputational contagion, or institutional salience shocks than to standard binary-treatment policy evaluations.
\\
\\
The scale and granularity of the underlying data collection effort substantially strengthen the credibility of the empirical analysis. The database contains detailed information on executive and board resignations across firms, sectors, states, and executive positions, including board members, vice presidents, chief financial officers, chairpersons, and female executives. Moreover, the coding structure permits the distinction between voluntary resignations, forced departures, governance restructuring, allegation-related disclosures, and broader organizational turnover dynamics. This degree of institutional detail is rarely available in existing studies of corporate governance responses to reputational shocks.
\\
\\
The resulting dataset provides several important advantages. First, the use of mandatory SEC disclosures substantially mitigates concerns regarding media-selection bias. Second, the reliance on pre-treatment textual information reduces concerns regarding reverse causality and post-treatment contamination. Third, the panel structure permits dynamic estimation of organizational adjustment processes over time, allowing the analysis to distinguish temporary disruption from persistent governance restructuring. Fourth, the combination of textual exposure measurement and high-frequency resignation outcomes allows the empirical framework to identify multiple margins of organizational response, including disciplinary governance, anticipatory reputational exit, monitoring restructuring, compliance adaptation, and informational amplification.
\\
\\
The richness of the disclosure corpus also enables a more nuanced interpretation of the \#MeToo shock itself. The empirical evidence suggests that the movement did not affect all firms symmetrically. Instead, the shock propagated through pre-existing informational and organizational structures embedded within firms' disclosure environments. Firms exhibiting greater ex ante textual exposure experienced systematically larger increases in executive and board turnover following October 2017, consistent with the interpretation that \#MeToo functioned as a large-scale reputational-information shock interacting with latent organizational vulnerability.
\\
\\
Importantly, the exposure index should not be interpreted as a direct measure of realized misconduct. Rather, it captures broader dimensions of governance fragility, reputational sensitivity, disclosure intensity, organizational instability, and institutional susceptibility embedded in the informational architecture of firm disclosures prior to the shock. This distinction is central because it implies that the estimated treatment effects reflect heterogeneous organizational exposure to reputational pressure rather than merely the consequences of publicly revealed allegations.
\\
\\
Taken together, the large-scale textual data collection effort, the construction of a predetermined exposure index, and the dynamic firm-level panel structure provide a uniquely rich empirical environment for studying how reputational-information shocks propagate through corporate governance systems. The resulting framework allows the analysis to move beyond descriptive correlations and instead identify the mechanisms through which exogenous shifts in social accountability norms reshape organizational behavior, executive turnover, and internal governance structures. Figure 1 presents state-level distribution of the 5.08 filings before the New York Post revelation of the Weinstein case in October 2017 whereas sector-specific distribution of the average case filings id presented in Figure 2.

\begin{figure}
    \centering
    \includegraphics[width=1.1\linewidth]{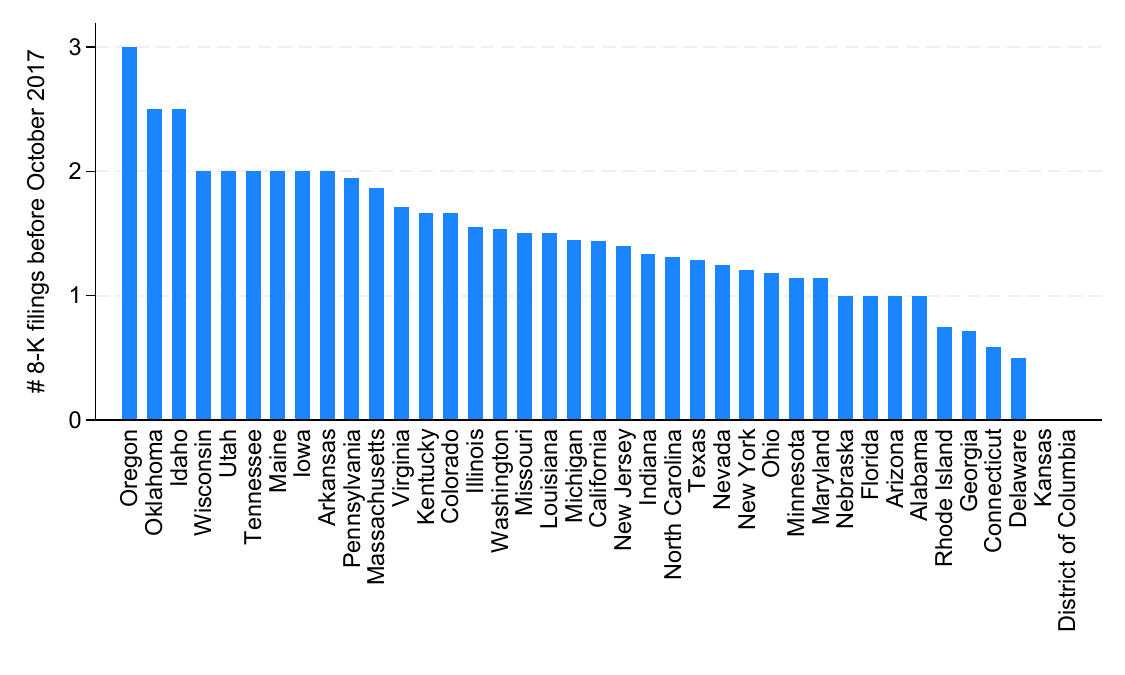}
    \caption{Item 5.08 Filings Before Weinstein Case Revelation Across States}
    \label{fig:placeholder}
\end{figure}

\begin{figure}
    \centering
    \includegraphics[width=1.1\linewidth]{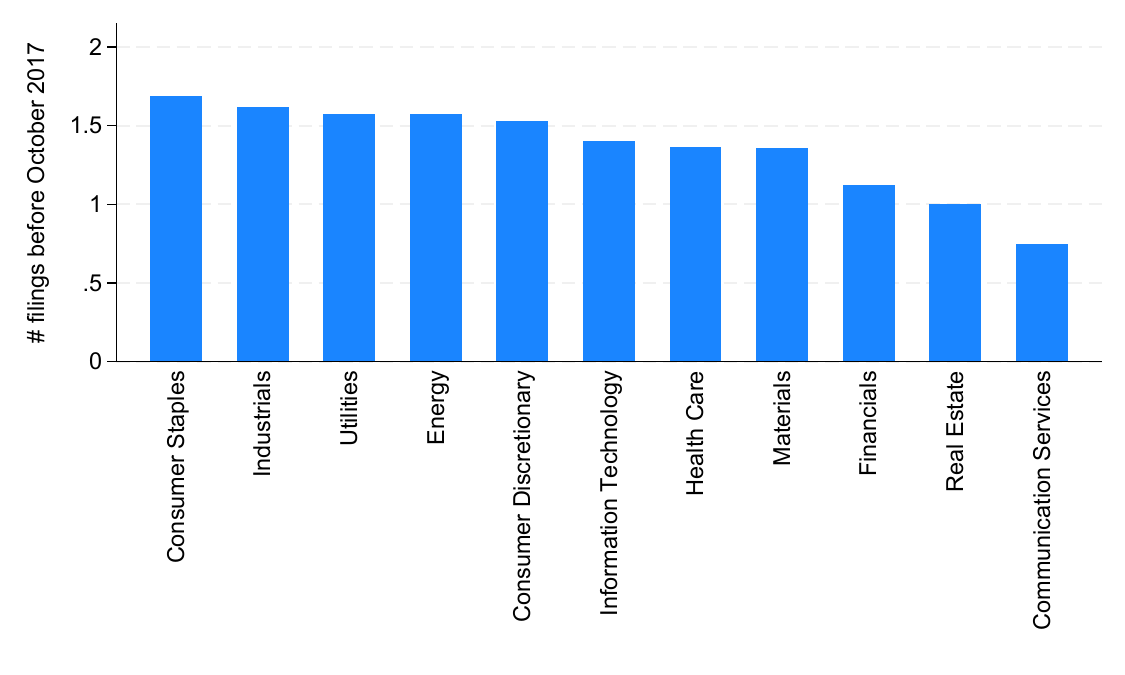}
    \caption{Item 5.08 Filings Before Weinstein Case Across Sectors}
    \label{fig:placeholder}
\end{figure}

\subsection{Descriptive Statistics}

Table 1 reports descriptive statistics for the firm-month panel spanning January 2016 through July 2024. The final sample contains 51,791 firm-month observations across S\&P 500 corporations and captures detailed organizational changes in executive and board composition derived from mandatory SEC disclosure filings. The panel structure combines high-frequency temporal variation with rich institutional detail regarding the type and nature of executive turnover events.
\\
\\
Several features of the data deserve attention. First, cumulative resignation activity exhibits substantial variation across firms and over time. The average firm records approximately 3.92 cumulative resignation events over the sample period, with a standard deviation of 4.31 and a range extending from zero to forty-six events. The large dispersion relative to the mean indicates considerable heterogeneity in organizational turnover across firms. While some firms experience almost no observable executive transitions during the sample period, others undergo repeated episodes of leadership restructuring. Such variation is economically important because it suggests that executive turnover is not uniformly distributed across firms but instead may reflect underlying differences in governance structures, organizational stability, and exposure to reputational or managerial risk.

\begin{table}[!htbp]
\centering
\caption{Summary Statistics of Executive Turnover and Organizational Characteristics}
\label{tab:summary_stats}

\scriptsize
\setlength{\tabcolsep}{4pt}
\renewcommand{\arraystretch}{1.05}

\begin{threeparttable}

\resizebox{0.98\textwidth}{!}{
\begin{tabular}{lccccccc}
\toprule
& \# Obs. & \multicolumn{3}{c}{Mean} & Std. Dev. & Min & Max \\
\cmidrule(lr){3-5}
Variable &  & Overall & Before & After &  &  &  \\
\midrule
Resignation Intensity   & 51,791 & 3.918 & 0.755 & 4.791 & 4.309 & 0 & 46 \\
Voluntary               & 51,791 & 0.053 & 0.047 & 0.054 & 0.230 & 0 & 3 \\
Forced                  & 51,791 & 0.002 & 0.001 & 0.002 & 0.041 & 0 & 2 \\
Female Executive        & 51,791 & 0.005 & 0.004 & 0.005 & 0.074 & 0 & 1 \\
Board Member            & 51,791 & 0.021 & 0.017 & 0.022 & 0.146 & 0 & 2 \\
Vice President          & 51,791 & 0.016 & 0.014 & 0.016 & 0.127 & 0 & 2 \\
Chief Financial Officer & 51,791 & 0.006 & 0.006 & 0.006 & 0.081 & 0 & 2 \\
Flagged Allegation      & 51,791 & 0.002 & 0.0009 & 0.002 & 0.043 & 0 & 1 \\
\bottomrule
\end{tabular}
}

\vspace{0.15cm}

\begin{minipage}{0.98\textwidth}
\scriptsize
\raggedright
\textit{Notes:} The table reports descriptive statistics for the main variables used in the analysis. ``Before'' refers to the pre-\#MeToo period prior to October 2017, while ``After'' corresponds to the post-Weinstein period. Resignation intensity measures cumulative executive and board turnover events at the firm-month level. Voluntary and forced departures distinguish between self-initiated and board-enforced executive exits. Additional variables capture turnover among female executives, board members, vice presidents, and chief financial officers, as well as disclosure activity related to flagged allegations. The sample consists of monthly observations for publicly listed U.S. firms from January 2016 through July 2024.
\end{minipage}

\vspace{0.15cm}
\hrule

\end{threeparttable}
\end{table}

More importantly, average cumulative resignation activity increases markedly after October 2017. Mean resignation intensity rises from 0.755 in the pre-Weinstein period to 4.791 thereafter, suggesting a substantial increase in executive turnover following the emergence of the \#MeToo movement. While these raw differences are not themselves causal estimates and may reflect broader secular trends, they provide preliminary evidence consistent with a structural shift in organizational dynamics during the post-shock period.
\\
\\
Second, the decomposition of turnover events reveals considerable heterogeneity across resignation types. Voluntary resignations represent the most common executive transition outcome, with a mean monthly frequency of 0.053 events per firm. In contrast, forced departures are relatively rare, averaging only 0.002 events per firm-month. This asymmetry is consistent with prior corporate-governance evidence suggesting that involuntary executive removal constitutes an exceptional rather than routine organizational event. Forced turnover often reflects more severe episodes of governance intervention, board discipline, or organizational distress.
\\
\\
Third, turnover among specific executive positions also displays meaningful variation. Board-member departures occur at an average rate of 0.021 per firm-month, exceeding rates observed for vice presidents (0.016) and chief financial officers (0.006). The relatively greater frequency of board turnover suggests that organizational adjustment may operate not only through managerial exits but also through restructuring of monitoring institutions themselves. Since boards occupy a central role in oversight and accountability, changes in board composition potentially reflect responses to shifts in governance expectations and reputational pressures.
\\
\\
Fourth, female executive departures are comparatively infrequent, averaging 0.005 events per firm-month. Although relatively rare in absolute terms, this category remains particularly informative because it may capture gender-specific organizational responses following the emergence of \#MeToo. Changes in female executive turnover may reflect multiple channels including shifts in organizational culture, changing labor-market opportunities, or strategic restructuring efforts designed to signal governance responsiveness.
\\
\\
Finally, allegation-related disclosures constitute the least frequent category, averaging approximately 0.002 events per firm-month. Nevertheless, despite their rarity, such disclosures potentially convey substantial informational content. Because allegations may trigger reputational spillovers, litigation concerns, or heightened board scrutiny, even infrequent events can generate disproportionately large organizational responses. Taken together, the descriptive evidence suggests two important features of the data. First, executive turnover is characterized by substantial cross-firm heterogeneity rather than uniform organizational adjustment. Second, different dimensions of turnover exhibit markedly different frequencies and institutional meanings. These patterns provide preliminary motivation for examining whether the \#MeToo movement generated differential effects across organizational margins, governance structures, and executive categories rather than simply inducing an aggregate increase in turnover.
\\
\\
Figure 2 presents the distribution of firm-month observations across broad GICS sectors. Several features are noteworthy. First, the sample exhibits substantial sectoral diversification rather than concentration in a small number of industries. Industrial firms constitute the largest share of observations, accounting for approximately 16.4 percent of the sample, followed by information technology and financial firms, each representing roughly 13.6 percent. Health care and consumer discretionary firms also comprise substantial portions of the sample, accounting for approximately 12.8 and 10.4 percent of observations, respectively. The relatively balanced distribution across major sectors is important for several reasons. First, it mitigates concerns that the empirical findings are driven by a small number of highly visible industries or firms. Second, broad sectoral coverage strengthens external validity by ensuring that the analysis captures heterogeneous organizational environments characterized by different governance structures, labor-market conditions, and reputational exposure. Third, sectoral diversity becomes particularly relevant in the context of the \#MeToo shock because industries plausibly differ in both organizational culture and sensitivity to reputational pressure The distribution also suggests economically meaningful variation in potential treatment intensity. Sectors such as information technology, financial services, consumer-facing industries, and health care may operate under different governance environments and face different forms of public scrutiny compared with utilities or materials industries. Such variation provides useful motivation for the heterogeneous-treatment analysis presented subsequently.
\\
\\
Figure 3 reports the industry-level composition of the sample. The data reveal broad dispersion across industries, with no single industry accounting for a dominant share of observations. Capital markets firms represent the largest individual category at approximately four percent of the sample, followed by health-care equipment, multi-utilities, chemicals, and diversified financial industries. An important implication of this pattern is that the analysis is not driven by a narrow concentration of firms operating within a particular industry environment. Instead, the dataset spans industries characterized by substantially different technological structures, organizational hierarchies, labor-market arrangements, and governance systems. This heterogeneity is particularly valuable because the transmission of reputational-information shocks may vary systematically across industries. Industries with greater public visibility, stronger dependence on human capital, or greater reliance on reputation and consumer trust may respond differently to changes in social accountability norms than industries characterized by more standardized production structures or lower public exposure. The relatively diffuse industry composition therefore provides an empirical environment in which heterogeneous organizational responses can be identified while reducing concerns regarding industry-specific confounding factors.

\begin{figure}
    \centering
    \includegraphics[width=1.1\linewidth]{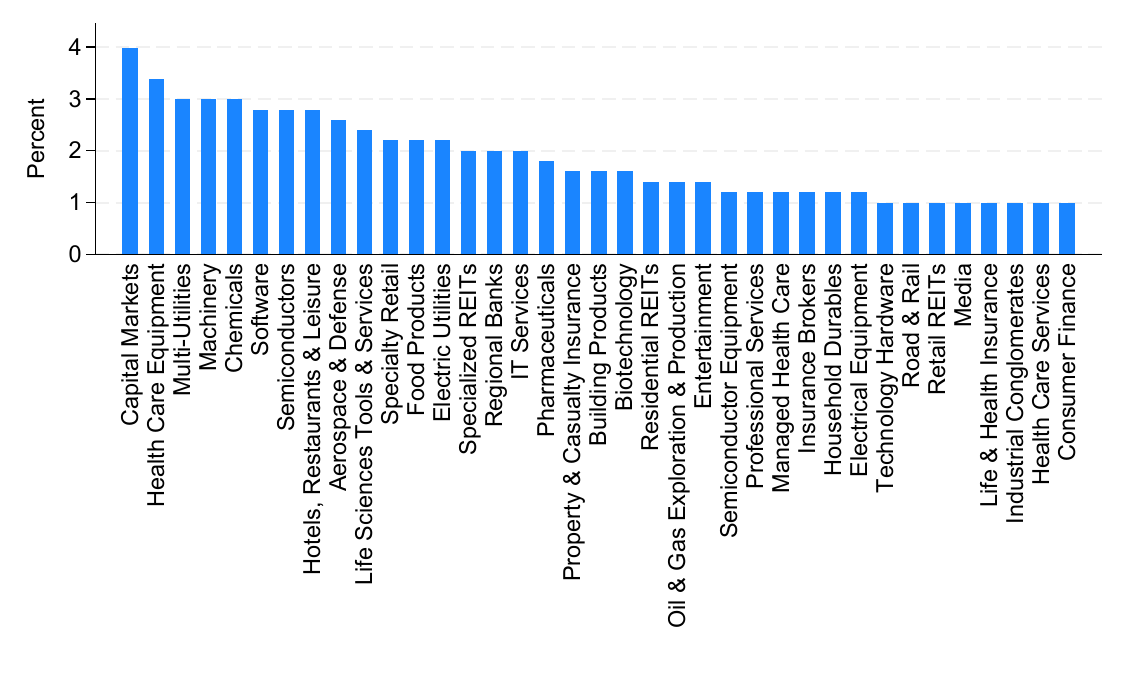}
    \caption{Industry Composition of the Executive Turnover Data}
    \label{fig:placeholder}
\end{figure}

Figure 4 presents the geographic distribution of firms across U.S. states. The sample exhibits substantial spatial variation, although several states account for a disproportionate share of observations. California represents the largest concentration, accounting for approximately 13.3 percent of the sample, followed by New York and Texas at approximately 10.2 and 9.8 percent, respectively. The concentration of firms within these states is unsurprising given their central roles within the U.S. corporate landscape. California hosts a substantial number of technology and innovation-intensive firms, New York remains a major center of financial and professional services activity, and Texas contains a large concentration of energy and industrial firms. At the same time, the sample extends across a broad range of states with differing legal environments, labor-market institutions, and industrial structures. Such geographic variation is particularly relevant because organizational responses to reputational shocks may interact with local institutional conditions, labor-market characteristics, and cultural norms. The geographic dispersion of firms therefore strengthens the external validity of the analysis and reduces concerns that estimated effects merely reflect localized shocks or region-specific organizational dynamics.
\\
\\
Taken together, the sectoral, industrial, and geographic distributions indicate that the sample spans a diverse set of organizational environments and institutional settings. The broad coverage reduces concerns that subsequent estimates reflect idiosyncratic features of a small number of sectors or geographic clusters and instead provides a rich setting for identifying heterogeneous organizational responses to reputational-information shocks.

\begin{figure}
    \centering
    \includegraphics[width=1.1\linewidth]{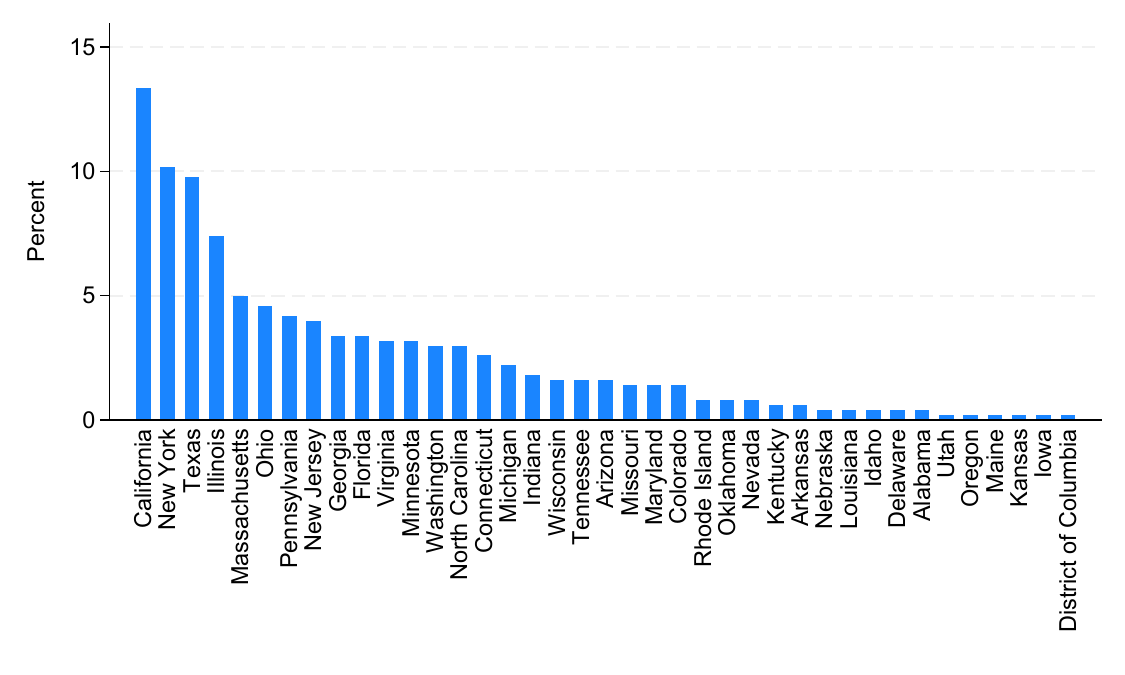}
    \caption{Geographic Distribution of Firm Headquarters Across U.S. States}
    \label{fig:placeholder}
\end{figure}

Figure 5 presents the evolution of resignation intensity for firms classified according to pre-treatment exposure status. Firms are categorized as exposed if they exhibited at least one relevant disclosure filing before October 2017 and as unexposed otherwise. The vertical dashed line denotes the Weinstein revelations in October 2017, which marked the emergence of the \#MeToo movement as a large-scale reputational-information shock.
\\
\\
Several features of the figure deserve attention. First, both exposed and unexposed firms exhibit relatively similar underlying trajectories during the pre-shock period. Prior to October 2017, resignation intensity evolves gradually and does not display evidence of abrupt divergence between the two groups. While exposed firms exhibit somewhat higher average resignation levels throughout the pre-treatment period, the overall trajectories remain broadly comparable and evolve smoothly over time. This pattern is important because it suggests that exposed firms were characterized by higher baseline organizational turnover rather than sharply divergent pre-treatment dynamics.
\\
\\
Second, following October 2017 the trajectories begin to separate more noticeably. Firms classified as exposed exhibit a substantially steeper increase in resignation intensity relative to unexposed firms. The divergence emerges gradually rather than instantaneously, suggesting that the organizational consequences of the \#MeToo shock were not limited to short-run reactions immediately following the Weinstein revelations. Instead, the effects appear to unfold progressively over time, consistent with organizational adaptation, internal investigations, governance restructuring, and reputational adjustment processes.
\\
\\
Third, the persistence of the divergence is economically informative. The gap between exposed and unexposed firms does not rapidly dissipate following the initial post-shock period. Rather, exposed firms continue to experience elevated resignation intensity throughout much of the subsequent period. Such persistence is difficult to reconcile with purely transitory media-attention effects and instead suggests more durable changes in organizational behavior and governance structures.
\\
\\
Importantly, the figure should not be interpreted as a formal test of the conventional parallel-trends assumption underlying binary-treatment difference-in-differences designs. Because the empirical framework exploits a common treatment timing with heterogeneous exposure intensity rather than staggered binary treatment adoption, standard pre-trend diagnostics are not directly applicable. Instead, the figure serves as a graphical diagnostic assessing whether firms characterized by different levels of pre-treatment exposure exhibit markedly different organizational trajectories prior to the shock itself. Overall, the evidence suggests that although exposed firms differ from unexposed firms in baseline resignation intensity, there is little indication of abrupt differential dynamics before October 2017. The more pronounced divergence emerging after the Weinstein revelations is consistent with the interpretation that the \#MeToo movement amplified pre-existing organizational vulnerabilities embedded within firms' disclosure environments.

\begin{figure}
    \centering
    \includegraphics[width=1.1\linewidth]{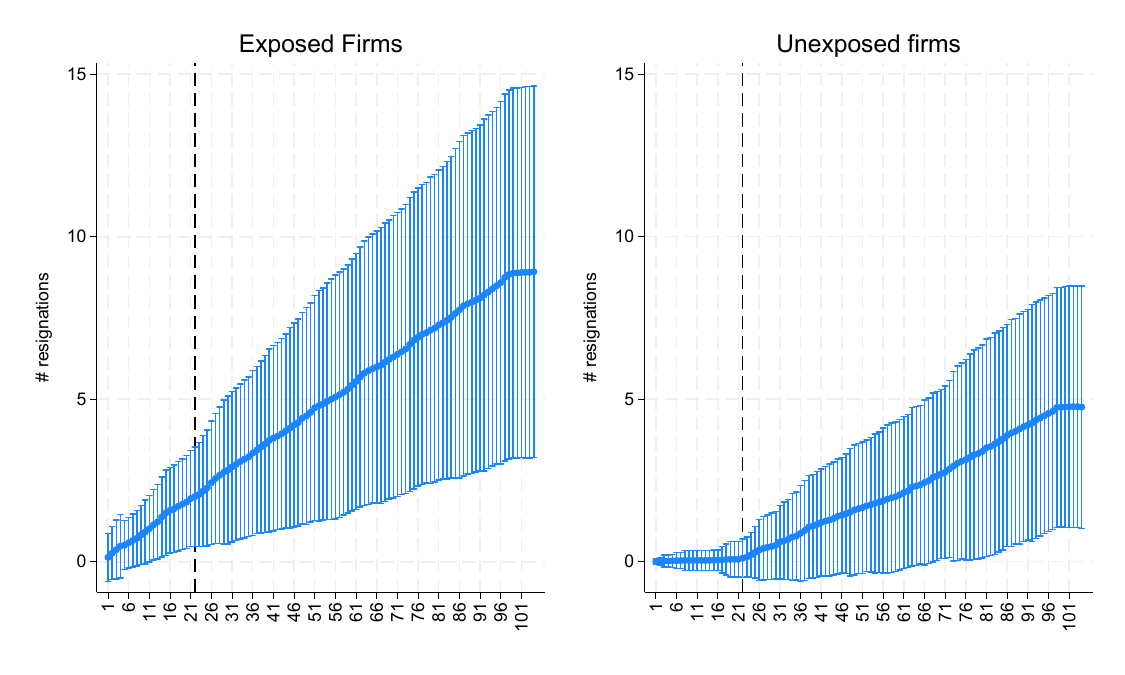}
    \caption{Pre- and Post-Shock Resignation Dynamics Across Exposure Groups}
    \label{fig:placeholder}
\end{figure}

\section{Results}

\subsection{Baseline difference-in-differences results}

Table 2 reports the baseline dynamic difference-in-differences estimates of the effect of the \#MeToo shock on corporate resignation intensity. The dependent variable is log resignation intensity, and all specifications include four lags of the outcome variable to account for persistence in firm-level turnover dynamics. The treatment variable is defined as the interaction between the post-Weinstein period and the predetermined exposure index constructed from pre-October-2017 SEC Form 8-K disclosures. Consequently, the coefficient identifies the differential post-shock response among firms with greater ex ante exposure, conditional on prior organizational turnover patterns.
\\
\\
Across all specifications, the estimated effects are positive and statistically significant, indicating that firms with greater pre-existing exposure experienced systematically larger increases in resignation intensity following the \#MeToo shock. Column (1) reports the baseline specification including firm fixed effects but excluding month effects. The estimated coefficient equals 0.300 and is statistically significant at the one percent level. Interpreted in percentage terms, this estimate implies approximately a 35 percent increase in resignation intensity among more exposed firms following the Weinstein revelations. This specification absorbs permanent differences across firms while allowing aggregate temporal variation to remain unrestricted.
\\
\\
Introducing month-fixed effects in column (2) substantially reduces the estimate to 0.035, corresponding to approximately a 3.6 percent increase in resignation intensity. The decline is economically informative because it suggests that a substantial fraction of the overall increase in executive turnover reflects common aggregate forces affecting all firms after October 2017. Nevertheless, even after absorbing these broad time effects, firms with greater ex ante exposure continue to exhibit significantly larger post-shock organizational responses.
\\
\\
Column (3) further incorporates industry fixed effects and produces an almost identical estimate of 0.035, again corresponding to approximately a 3.6 percent increase. The near invariance of the estimate suggests that the effect is not driven by systematic differences in industry composition. Additionally, column (4) introduces a substantially richer fixed-effect structure and produces an estimate of 0.123. This coefficient implies roughly a 13.1 percent increase in resignation intensity among exposed firms following the shock. The larger estimate suggests that once additional dimensions of unobserved heterogeneity are absorbed, the exposure-based effect becomes more pronounced.
\\
\\
Furthermore, column (5) introduces month-industry fixed effects and yields an estimated coefficient of 0.257, corresponding to approximately a 29.3 percent increase in resignation intensity. This specification effectively compares firms operating within similar sector-time environments and therefore substantially reduces concerns regarding industry-specific post-2017 shocks. Finally, the most saturated specification in column (6) produces a coefficient estimate of 0.113, implying approximately a 12 percent increase in resignation intensity following the \#MeToo shock. Importantly, this estimate remains highly statistically significant despite the demanding control structure and therefore represents a conservative estimate of the organizational effect.
\\
\\
Taken together, the estimates indicate economically meaningful effects rather than merely statistically detectable differences. Across specifications, the results imply increases in resignation intensity ranging from approximately 4 to 35 percent, with preferred saturated specifications suggesting effects in the range of 12 to 13 percent. These magnitudes are substantial given that executive turnover typically represents relatively infrequent and organizationally costly events. The evidence therefore suggests that the \#MeToo movement operated not merely as a broad social phenomenon but as a large-scale reputational-information shock that generated economically meaningful changes in organizational behavior among firms exhibiting greater ex ante governance and reputational vulnerability.

\begin{table}[!htbp]
\centering
\caption{Dynamic Difference-in-Differences Estimates of the \#MeToo Shock on Corporate Resignation Intensity, 01:2016-07:2024}
\label{tab:baseline_did}

\scriptsize
\setlength{\tabcolsep}{3.5pt}
\renewcommand{\arraystretch}{1.05}

\begin{threeparttable}

\resizebox{0.98\textwidth}{!}{
\begin{tabular}{lcccccc}
\toprule
& \multicolumn{6}{c}{Full Sample} \\
\cmidrule(lr){2-7}
& (1) & (2) & (3) & (4) & (5) & (6) \\
\midrule

Post-Weinstein $\times$ Exposure
& 0.300*** 
& 0.035** 
& 0.035** 
& 0.123*** 
& 0.257*** 
& 0.113*** \\

& (0.015)
& (0.013)
& (0.013)
& (0.005)
& (0.014)
& (0.000) \\

\midrule

Firm Fixed Effects                & YES & YES & YES & YES & NO  & YES \\
Month Fixed Effects              & NO  & YES & YES & YES & YES & YES \\
Industry Fixed Effects           & NO  & NO  & YES & NO  & YES & YES \\
Firm-Month Fixed Effects         & NO  & NO  & NO  & YES & NO  & NO  \\
Month-Industry Fixed Effects     & NO  & NO  & NO  & NO  & YES & NO  \\
Firm-Industry-Month Fixed Effects& NO  & NO  & NO  & NO  & NO  & YES \\

\midrule

\# Firms                & 498 & 498 & 498 & 498 & 498 & 498 \\
\# Months               & 91  & 91  & 91  & 91  & 91  & 91  \\
\# Paired Observations  & 45,317 & 45,317 & 45,317 & 45,317 & 45,317 & 45,317 \\

\bottomrule
\end{tabular}
}

\vspace{0.15cm}

\begin{minipage}{0.98\textwidth}
\scriptsize
\raggedright
\textit{Notes:} The table reports dynamic difference-in-differences estimates of the effect of the \#MeToo shock on firm-level resignation intensity. The dependent variable is the logarithm of resignation intensity. The treatment variable is defined as the interaction between the post-Weinstein period and a predetermined firm-level exposure index constructed from SEC Form 8-K filings submitted before October 2017. All specifications include four lags of the dependent variable to account for persistence in resignation dynamics. Columns differ in the set of fixed effects used to absorb unobserved heterogeneity across firms, months, industries, and higher-dimensional firm-industry-time structures. Cluster-robust standard errors are reported in parentheses. The sample consists of 498 firms observed over 91 months, yielding 45,317 firm-month observations. Asterisks denote significance levels are at 10\% (*), 5\% (**), and 1\% (***), respectively.
\end{minipage}

\vspace{0.15cm}
\hrule

\end{threeparttable}
\end{table}

Figure 6 presents the dynamic evolution of cohort-specific treatment effects following the Weinstein revelations in October 2017. The estimates are adjusted for firm and month fixed effects and therefore isolate differential post-shock organizational responses after accounting for persistent firm-specific characteristics and aggregate temporal variation. The figure reports average treatment effects over time together with associated confidence intervals.
\\
\\
Several features of the dynamic pattern deserve careful attention. First, the figure reveals an economically large and immediate organizational response following the emergence of the \#MeToo shock. Estimated treatment effects initially range between approximately 0.35 and 0.38 log points during the early post-treatment period, corresponding to roughly between 42 percent (=exp(0.35)-1)) and 46 percent (=exp(0.38)-1)) higher resignation intensity among exposed firms relative to less exposed firms. These magnitudes suggest a substantial initial restructuring response following the shock and imply effects that are difficult to interpret as merely short-lived fluctuations in managerial turnover.
\\
\\
Second, the effects remain highly persistent throughout much of the post-treatment period. Following the initial adjustment phase, estimated treatment effects stabilize at approximately 0.25--0.30 log points for an extended period between 2019 and 2021. These estimates imply persistent increases in resignation intensity of approximately 28 percent and 35 percent, relative to less exposed firms. The persistence of the effect is particularly important because it suggests that the organizational consequences of the \#MeToo movement extended well beyond immediate responses to media attention surrounding the Weinstein allegations. A purely transitory information shock would be expected to generate rapid reversion toward pre-treatment levels once public attention diminished. Instead, the estimates indicate prolonged organizational adjustment.
\\
\\
Third, beginning around 2021-2022 the magnitude of the treatment effect gradually declines. Importantly, this attenuation occurs smoothly rather than abruptly. Treatment effects progressively decrease toward approximately 0.02-0.05 log points by the end of the sample period, eventually becoming statistically indistinguishable from zero. The gradual convergence pattern provides important insight into the underlying adjustment mechanism. Rather than reflecting immediate one-time turnover events, the evidence is more consistent with an organizational adaptation process in which firms initially undertake substantial restructuring efforts following a large reputational-information shock and subsequently converge toward a new equilibrium.

\begin{figure}
    \centering
    \includegraphics[width=1\linewidth]{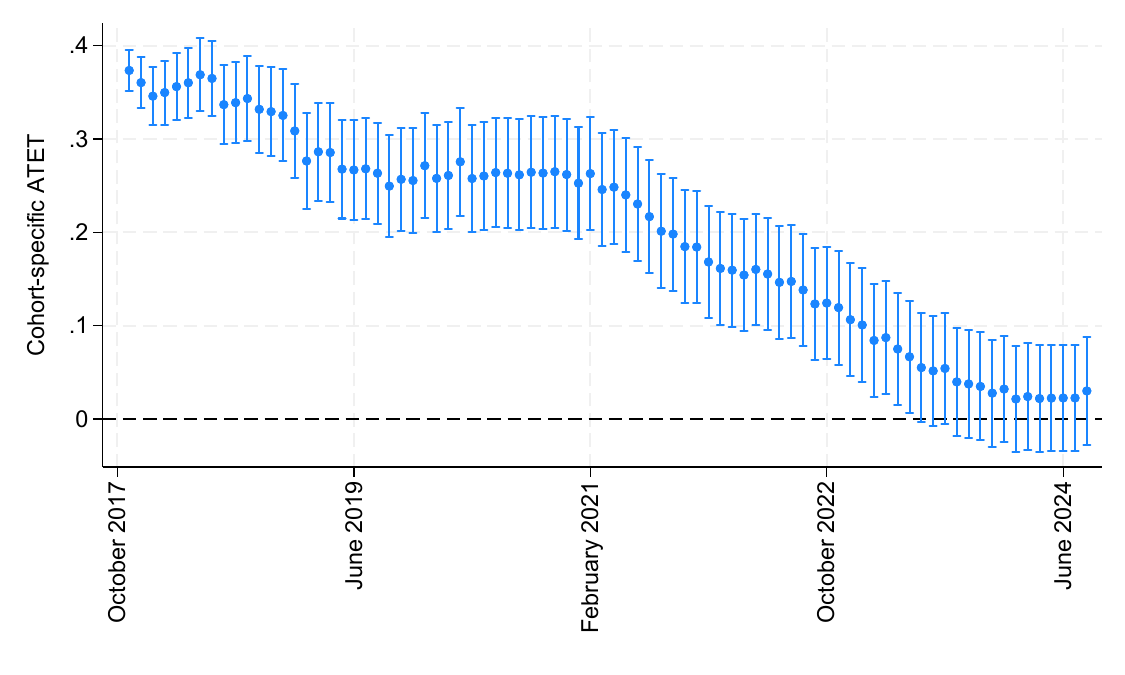}
    \caption{Dynamic Evolution of Cohort-Specific Treatment Effects Following the \#MeToo Shock}
    \label{fig:placeholder}
\end{figure}

Dynamic profile of post-exposure effect is difficult to reconcile with explanations based solely on contemporaneous media salience or temporary public attention. Instead, the evidence is more consistent with mechanisms involving (i) internal governance restructuring, (ii) organizational learning, (iii) compliance adaptation, (iv) revisions in monitoring practices, and (v) persistent changes in accountability norms. In this sense, the figure suggests that the \#MeToo movement operated not merely as a contemporaneous shock to public attention but rather as a catalyst for broader changes in organizational behavior and governance systems. Finally, the absence of abrupt oscillations or reversals in estimated effects provides additional reassurance regarding the stability of the empirical design. The treatment effect evolves smoothly over time and exhibits a coherent pattern of initial amplification followed by gradual attenuation, supporting the interpretation of a persistent but ultimately self-limiting organizational adjustment process.

\subsection{Two-way fixed-effect estimates}

Before turning to the baseline estimates, it is useful to consider the econometric challenges posed by the underlying panel structure. The empirical design combines repeated monthly observations for firms with persistent organizational outcomes and heterogeneous treatment intensity generated by pre-treatment exposure to the \#MeToo shock. Executive turnover decisions are unlikely to be independent across time because firms frequently experience episodes of sustained organizational restructuring in which one executive departure triggers subsequent managerial exits, board adjustments, or broader leadership realignment. Ignoring such persistence may therefore confound temporary fluctuations with systematic organizational adjustment processes. The inclusion of lagged outcomes follows the standard dynamic panel literature emphasizing the importance of accounting for persistence and state dependence in longitudinal settings (Arellano \& Bond 1991, Blundell \& Bond 1998).
\\
\\
To address these concerns, the analysis employs a dynamic two-way fixed effects framework combining firm and month fixed effects with lagged resignation intensity. Firm fixed effects absorb time-invariant organizational characteristics such as corporate culture, managerial structure, or governance quality, while month fixed effects account for aggregate shocks simultaneously affecting all firms, including macroeconomic conditions, labor-market developments, and broader changes in social accountability norms. The inclusion of lagged resignation intensity further ensures that identification derives from differential changes in resignation dynamics rather than permanent differences in baseline turnover levels across firms (Wooldridge 2010, Angrist \& Pischke 2009).
\\
\\
An additional concern relates to statistical inference. Conventional cluster-robust procedures may understate uncertainty when observations exhibit serial dependence within firms or when residual correlation persists across time. Because executive turnover dynamics may display substantial within-firm persistence and organizational shocks may generate correlated responses over multiple periods, asymptotic inference procedures may overstate statistical precision. To address this possibility, the analysis complements conventional two-way fixed effects estimation with wild-cluster bootstrap inference using increasing numbers of bootstrap replications. Wild-cluster bootstrap procedures have been shown to provide more reliable inference in panel environments characterized by clustered dependence structures and finite-sample distortions. Consequently, the bootstrap estimates reported below provide a more conservative assessment of statistical significance and allow evaluation of whether the estimated treatment effects remain robust under increasingly demanding inferential procedures.
\\
\\
Table 3 reports dynamic two-way fixed effects estimates of the effect of the \#MeToo shock on firm-level resignation intensity. The treatment variable is constructed as the interaction between the predetermined exposure index and the post-Weinstein period indicator. All specifications include firm and month fixed effects together with four lags of resignation intensity to account for persistence in organizational turnover dynamics.
\\
\\
Column (1) presents the conventional two-way fixed effects estimate using standard cluster-robust inference. The estimated coefficient equals 0.113 and is statistically significant at the one percent level. Because the dependent variable is specified in logarithmic form, the estimate implies approximately 12 percent (=exp(0.113)-1) higher resignation intensity among firms characterized by greater pre-treatment exposure following the emergence of the \#MeToo shock. Economically, this magnitude is substantial. Executive turnover events are relatively infrequent and often associated with considerable organizational costs arising from managerial replacement, restructuring, information loss, and governance adjustment. Consequently, an increase of approximately twelve percent in resignation intensity represents a nontrivial organizational response. Columns (2) through (5) examine the sensitivity of statistical inference using wild-cluster bootstrap procedures with increasing numbers of replications. The motivation for employing bootstrap inference arises from well-known concerns regarding finite-cluster distortions and dependence structures in panel estimators. Although conventional asymptotic standard errors often perform well in large samples, inference may become unreliable when residual dependence exists within firms over time.
\\
\\
Several findings deserve emphasis. First, the point estimate remains numerically identical across all specifications. Regardless of whether inference relies on conventional asymptotic approximations or bootstrap procedures, the estimated treatment effect remains fixed at 0.113. Such stability suggests that the estimated relationship reflects a substantive feature of the data rather than sensitivity to specification choices or inferential procedures. Second, the bootstrap t-statistics remain remarkably stable, varying only between approximately 2.45 and 2.48 despite increasing the number of replications from 100 to 1,000. This degree of stability is economically informative because it suggests that the statistical significance of the estimates is not driven by idiosyncratic sampling variation or small-cluster effects. Third, the results indicate that the positive effect of the \#MeToo shock survives under substantially more conservative inference procedures. The persistence of statistical significance under wild-cluster bootstrap inference strengthens confidence that the estimated treatment effect does not merely arise from residual within-firm correlation or finite-sample distortions.
\\
\\
Taken together, the evidence suggests that the estimated organizational response is not only economically meaningful but also statistically robust. Firms characterized by greater ex ante exposure experienced approximately twelve percent higher resignation intensity following the \#MeToo shock, and this conclusion remains essentially unchanged under increasingly demanding inference procedures. More broadly, the results reinforce the interpretation that the \#MeToo movement operated as a large-scale reputational-information shock whose organizational consequences propagated systematically through pre-existing differences in firm-level exposure.

\begin{table}[!htbp]
\centering
\caption{Dynamic Two-Way Fixed Effects Estimates with Wild-Cluster Bootstrap Inference}
\label{tab:twfe_wildbootstrap}

\scriptsize
\setlength{\tabcolsep}{4pt}
\renewcommand{\arraystretch}{1.05}

\begin{threeparttable}

\resizebox{0.98\textwidth}{!}{
\begin{tabular}{lccccc}
\toprule
& Standard TWFE & \multicolumn{4}{c}{Wild-Cluster Bootstrap (\# replications)} \\
\cmidrule(lr){2-2} \cmidrule(lr){3-6}
& (1) & 100 & 200 & 500 & 1,000 \\
&     & (2) & (3) & (4) & (5) \\
\midrule

Post-Weinstein $\times$ Exposure
& 0.113*** & 0.113*** & 0.113** & 0.113*** & 0.113*** \\
& (0.045) & (2.45) & (2.48) & (2.47) & (2.47) \\

\midrule

Firm Fixed Effects   & YES & YES & YES & YES & YES \\
Month Fixed Effects  & YES & YES & YES & YES & YES \\
Outcome Persistence  & YES & YES & YES & YES & YES \\
Persistence \(p\)-value & 0.000 & 0.000 & 0.000 & 0.000 & 0.000 \\

\bottomrule
\end{tabular}
}

\vspace{0.15cm}

\begin{minipage}{0.98\textwidth}
\scriptsize
\raggedright
\textit{Notes:} The table reports dynamic two-way fixed effects estimates of the impact of the \#MeToo shock on firm-level resignation intensity. The dependent variable is the logarithm of resignation intensity. The treatment variable is defined as the interaction between the post-Weinstein period indicator and a predetermined firm-level exposure index constructed exclusively from SEC Form 8-K filings submitted before October 2017. All specifications include four lags of the dependent variable to account for persistence in resignation dynamics, together with firm and month fixed effects. Column (1) reports conventional two-way fixed effects estimates with clustered standard errors reported in parentheses. Columns (2)-(5) report estimates based on wild-cluster bootstrap inference using increasing numbers of bootstrap replications. Parentheses in Columns (2)-(5) report bootstrap \(t\)-statistics. The sample consists of 498 firms observed monthly from January 2016 through July 2024. Asterisks denote significance levels are at 10\% (*), 5\% (**), and 1\% (***), respectively.
\end{minipage}

\vspace{0.15cm}
\hrule

\end{threeparttable}
\end{table}

\subsection{Mechanisms of Organizational Adjustment Following the \#MeToo Shock}

While the baseline estimates establish that firms with greater ex ante exposure experienced significantly higher post-shock resignation intensity, they do not by themselves identify the channels through which the \#MeToo shock propagated through organizations. Table 4 therefore examines the mechanisms underlying the observed increase in executive turnover by decomposing organizational responses into distinct dimensions of adjustment. Each specification estimates the effect of the \#MeToo shock on outcomes designed to capture specific organizational mechanisms and includes firm-industry fixed effects to absorb persistent heterogeneity across firms operating within similar industrial environments.
\\
\\
Several findings emerge. First, the aggregate effect reported in column (1) indicates that firms with greater exposure experienced approximately 12 percent (=exp(0.113)-1) higher resignation intensity following the shock. This estimate serves as the benchmark against which mechanism-specific effects can be interpreted and suggests that the \#MeToo movement generated economically meaningful organizational responses rather than isolated executive departures. Column (2) examines forced resignations as a proxy for board-enforced disciplinary governance. The estimated coefficient equals 0.175, implying approximately 19 percent (=exp(0.175)-1) higher forced resignation intensity among exposed firms. This effect represents the largest estimated mechanism in the table and provides particularly strong evidence for an accountability interpretation. Forced executive turnover is unlikely to arise from ordinary labor-market mobility and instead typically reflects active intervention by monitoring institutions. The evidence therefore suggests that boards and internal governance structures responded to the \#MeToo shock through more aggressive disciplinary actions and heightened scrutiny of executive conduct.
\\
\\
Column (3) examines voluntary resignations and captures anticipatory exits motivated by reputational self-insurance. The estimated coefficient of 0.127 implies approximately 13.5\% higher voluntary resignation intensity following the shock. This result suggests that organizational responses were not limited to externally imposed removals. Instead, executives may have internalized changes in reputational costs and adjusted behavior proactively by exiting organizations facing heightened scrutiny or uncertainty. Furthermore, column (4) examines board-member turnover as a mechanism of governance oversight restructuring. The estimated coefficient of 0.134 corresponds to approximately 14.3 percent higher board turnover among exposed firms. This finding is particularly informative because boards represent central monitoring institutions within firms. Increased turnover at the board level suggests that organizational responses extended beyond managerial replacement and involved broader restructuring of oversight mechanisms themselves.
\\
\\
In addition, column (5) examines female executive departures and identifies a mechanism related to organizational climate and labor-market sorting. The estimated coefficient of 0.133 implies approximately 14 percent higher turnover among female executives following the shock. Several interpretations are plausible. Female executives may have become more responsive to organizational culture and workplace conditions after \#MeToo, or firms may have altered leadership structures in response to changing expectations regarding diversity and governance accountability. Regardless of the specific channel, the results suggest that the organizational consequences of the shock differed across demographic dimensions of leadership.
\\
\\
Column (6) considers vice-president turnover as a measure of broader managerial restructuring. The estimated coefficient of 0.128 implies approximately 13 percent higher turnover intensity among vice presidents. This finding is important because vice presidents typically occupy intermediate positions within organizational hierarchies. Increased turnover among these executives suggests that the effects of \#MeToo propagated beyond highly visible senior leadership positions and affected broader managerial structures. In addition, column (7) examines chief financial officer departures and captures financial-governance and compliance responses. The estimated coefficient of 0.127 implies approximately 13 percent higher turnover among CFOs. Because CFOs occupy central positions in disclosure practices, compliance systems, and investor relations, the evidence suggests that the \#MeToo shock may also have generated organizational responses related to internal monitoring and risk management.
\\
\\
Finally, column (8) examines allegation-related disclosures as a proxy for informational amplification and reputational spillovers. The estimated coefficient of 0.149 corresponds to approximately 16 percent higher allegation-related disclosure activity among exposed firms. This result is particularly important because it suggests that \#MeToo may have affected not only realized organizational outcomes but also the informational environment surrounding firms themselves. Increased disclosure activity is consistent with a process through which reputational shocks amplified scrutiny and increased the probability that organizational vulnerabilities became publicly observable. Taken together, the estimates suggest that the \#MeToo movement did not generate a narrow response concentrated solely in executive turnover. Instead, the evidence points toward a broader process of organizational adjustment operating through multiple channels simultaneously, including disciplinary governance, anticipatory exits, restructuring of monitoring institutions, changes in managerial hierarchies, compliance responses, and informational amplification. The pattern of results therefore supports the interpretation that \#MeToo functioned as a large-scale reputational-information shock interacting with latent organizational vulnerabilities embedded within firms' pre-existing governance structures.

\begin{sidewaystable*}[p]
\centering
\thispagestyle{empty}

\caption{Mechanisms of Organizational Adjustment Following the \#MeToo Shock}
\label{tab:mechanisms}

\footnotesize
\renewcommand{\arraystretch}{1.12}
\setlength{\tabcolsep}{3.2pt}

\begin{threeparttable}

\begin{adjustbox}{max width=\textheight}
\begin{tabular}{lcccccccc}
\toprule
& All & Forced & Voluntary & Board & Female & Vice & Chief Financial & Flagged \\
& Resignations & Resignations & Resignations & Member & Executive & President & Officer & Allegation \\
& (1) & (2) & (3) & (4) & (5) & (6) & (7) & (8) \\
\midrule

Post-Weinstein $\times$ Exposure
& 0.113*** & 0.175*** & 0.127*** & 0.134*** & 0.133*** & 0.128*** & 0.127*** & 0.149*** \\
& (0.045) & (0.212) & (0.004) & (0.006) & (0.013) & (0.007) & (0.011) & (0.030) \\

\midrule
Firm-Industry Fixed Effects
& YES & YES & YES & YES & YES & YES & YES & YES \\

\midrule
\# Firms
& 498 & 498 & 498 & 498 & 498 & 498 & 498 & 498 \\

\# Months
& 99 & 99 & 99 & 99 & 99 & 99 & 99 & 99 \\

\# Paired Observations
& 45,317 & 45,317 & 45,317 & 45,317 & 45,317 & 45,317 & 45,317 & 45,317 \\

\midrule
Mechanism
& \parbox[c]{2.1cm}{\centering Governance\\ Destabilization}
& \parbox[c]{2.3cm}{\centering Board-Enforced\\ Disciplinary\\ Governance}
& \parbox[c]{2.5cm}{\centering Anticipatory Exit\\ \& Reputational\\ Self-Insurance}
& \parbox[c]{2.3cm}{\centering Governance\\ Oversight\\ Restructuring}
& \parbox[c]{2.3cm}{\centering Toxic Climate\\ \& Increased\\ Mobility}
& \parbox[c]{2.1cm}{\centering Mid-Level\\ Restructuring}
& \parbox[c]{2.5cm}{\centering Financial\\ Governance\\ \& Compliance\\ Response}
& \parbox[c]{2.5cm}{\centering Information\\ Revelation\\ \& Reputational\\ Amplification} \\

\bottomrule
\end{tabular}
\end{adjustbox}

\vspace{0.35cm}

\begin{minipage}{0.96\textheight}
\footnotesize
\raggedright
\textit{Notes:} The table reports dynamic difference-in-differences estimates of the mechanisms underlying the impact of the \#MeToo shock on firm-level organizational adjustment. The treatment variable is defined as the interaction between the post-Weinstein period indicator and a predetermined firm-level exposure index constructed exclusively from SEC Form 8-K filings submitted before October 2017. Outcome variables correspond to distinct dimensions of organizational response, including aggregate resignation intensity, forced resignations, voluntary resignations, board turnover, female executive turnover, vice-president turnover, chief financial officer turnover, and allegation-related disclosures. All specifications include four lags of the dependent variable to account for persistence in organizational turnover dynamics and firm-industry fixed effects to absorb persistent heterogeneity across firms operating within similar industrial environments. Cluster-robust standard errors at the firm level are reported in parentheses. The sample consists of 498 firms observed monthly from January 2016 through July 2024. Asterisks denote significance levels are at 10\% (*), 5\% (**), and 1\% (***), respectively.
\end{minipage}

\vspace{0.15cm}
\hrule

\end{threeparttable}
\end{sidewaystable*}

\subsection{Dynamic organizational adjustment and matrix completion estimation}

While the baseline difference-in-differences estimates provide evidence that firms with greater ex ante exposure experienced larger post-shock increases in resignation intensity, conventional fixed-effects approaches may impose restrictive assumptions regarding the evolution of counterfactual organizational trajectories. In particular, two-way fixed effects estimators assume that, conditional on observed controls and fixed effects, untreated outcomes evolve similarly across firms. Such assumptions may become restrictive in settings characterized by heterogeneous treatment intensity and dynamic organizational responses \citep{dechaisemartin2020twfe, sun2021estimating}.
\\
\\
These concerns are especially relevant in the present setting. The \#MeToo movement constituted a common shock occurring at a single point in time, while firms differed substantially in their pre-treatment exposure and organizational vulnerability. Consequently, treatment effects need not emerge uniformly across firms or sectors. Organizational responses may instead unfold gradually through internal investigations, changes in governance structures, managerial realignment, and revisions in monitoring practices. Such dynamics may generate complex counterfactual trajectories that are difficult to capture using standard additive fixed-effects structures. To accommodate these features, the analysis employs matrix completion methods to estimate dynamic treatment effects. Matrix completion estimators reconstruct missing counterfactual outcomes by exploiting latent low-dimensional structures embedded within panel data \citep{athey2021matrix}. Rather than assuming that untreated outcomes differ only through additive firm and time effects, the estimator allows counterfactual trajectories to be shaped by unobserved interactive components that vary simultaneously across firms and time. In practice, this approach permits substantially richer forms of heterogeneity and therefore provides greater flexibility for recovering organizational adjustment paths following the \#MeToo shock.
\\
\\
Conceptually, the matrix completion framework is particularly well suited for the present application because the treatment is characterized by common timing but heterogeneous exposure intensity. The estimator allows the data to recover the dynamic evolution of organizational responses without imposing restrictive assumptions regarding the functional form of treatment effects or the structure of latent organizational heterogeneity. Consequently, the estimates reported below provide insight not only into the magnitude of the treatment effect but also into the temporal evolution and persistence of organizational adjustment following the shock.
\\
\\
Table 5 reports dynamic matrix completion estimates of the effect of the \#MeToo shock on executive and board turnover. The table is informative because it moves beyond average treatment effects and asks a more demanding question regarding organizational adjustment: how large were the effects over time, and how did they vary across sectors? Unlike conventional fixed-effects approaches, matrix completion allows counterfactual organizational trajectories to be reconstructed using latent structures embedded within the panel and therefore accommodates richer forms of dynamic heterogeneity.
\\
\\
The full-sample estimate reported in Panel A is positive, statistically significant, and economically meaningful. The overall average treatment effect on the treated equals 0.201, implying approximately 22 percent (=exp(0.201)-1) higher resignation intensity among exposed firms following the \#MeToo shock. This estimate is larger than the preferred two-way fixed effects estimate, suggesting that once counterfactual organizational paths are reconstructed using latent interactive structures rather than additive fixed effects alone, the implied organizational response becomes stronger.
\\
\\
The sector-level estimates reveal substantial heterogeneity in the average response. Column (2) shows that Communication Services experienced an estimated increase of approximately 59 percent in resignation intensity. This sector is characterized by substantial public visibility and dependence on reputation-sensitive activities, making firms particularly vulnerable to changes in social accountability norms. Column (3) reports a more moderate but still economically meaningful effect of approximately 17 percent in Consumer Discretionary industries.
\\
\\
The largest average effects emerge in Consumer Staples in column (4), where the estimated coefficient of 0.704 implies approximately 102\% (=exp(0.704)-1) higher resignation intensity. Such a large estimate suggests that firms operating in sectors with substantial consumer interaction and reputation dependence may have faced stronger incentives to adjust organizational structures following the shock. Columns (5) through (9) reveal substantial variation across sectors. Energy and Communication Services display comparatively large responses of approximately 58 and 59 percent, respectively, while Financials, Industrials, and Information Technology exhibit positive but more moderate responses ranging between approximately 20 and 22 percent. By contrast, Health Care, Materials, and Real Estate show weaker or statistically insignificant average effects.
\\
\\
This pattern is informative because it suggests that \#MeToo did not generate a homogeneous corporate response. Rather, the shock propagated differentially across sectors depending on organizational structure, reputational exposure, workforce composition, and governance environments. Panel B provides particularly rich evidence regarding the temporal evolution of these responses. One month after the shock, the full-sample estimate equals 0.355, implying approximately 42.6 percent (=exp(0.355)-1) higher resignation intensity among exposed firms. The magnitude of this initial response is visible across nearly all sectors and is particularly pronounced in Utilities, Consumer Staples, Energy, Communication Services, Information Technology, and Real Estate.
\\
\\
Five months after the shock, the estimated full-sample effect remains virtually unchanged at approximately 40 percent, while ten months after the shock the estimate continues to imply approximately 39 to 40 percent higher resignation intensity. The persistence of the effect during the first year following the shock is important because it suggests that organizational responses extended beyond immediate reactions to the Weinstein revelations. The sector-specific dynamics reinforce this interpretation. Consumer Staples displays particularly large and persistent responses. The estimated effect increases from approximately 59 percent one month after the shock to more than 140 percent fifty months after the shock. Communication Services similarly exhibits large and persistent responses throughout the entire period. Energy also displays strong dynamic persistence, while Information Technology maintains large responses through approximately twenty months before attenuating toward the end of the sample period.
\\
\\
Fifteen and twenty months after the shock, the full-sample estimates gradually decline to approximately 37 and 30 percent, respectively. This attenuation appears economically plausible. Firms seem initially to react aggressively through executive replacement and organizational restructuring and subsequently move toward a more stable equilibrium. Perhaps the most important result emerges fifty months after the initial shock. The estimated full-sample effect remains positive and statistically significant at 0.183, implying approximately 20 percent (=exp(0.183)-1) higher resignation intensity relative to the counterfactual. The persistence of the effect nearly four years after the initial event is difficult to reconcile with interpretations based solely on temporary media attention or short-run reputational shocks.
\\
\\
Instead, the dynamic pattern appears more consistent with a process of prolonged organizational adaptation involving changes in monitoring structures, internal compliance systems, governance mechanisms, and managerial accountability. The gradual decline without abrupt reversal further suggests convergence toward a new organizational equilibrium rather than a temporary spike in executive turnover. Taken together, the matrix completion estimates indicate that the \#MeToo movement generated not merely an immediate increase in resignations, but rather a persistent process of organizational adjustment whose effects unfolded over multiple years and varied substantially across sectors. The results therefore support the broader interpretation of \#MeToo as a large-scale reputational-information shock interacting with latent organizational vulnerabilities embedded within firms' pre-existing governance environments.

\begin{sidewaystable*}[p]
\centering
\thispagestyle{empty}

\caption{Dynamic Matrix Completion Estimates of Organizational Adjustment Following the \#MeToo Shock}
\label{tab:matrix_completion}

\tiny
\renewcommand{\arraystretch}{1.08}
\setlength{\tabcolsep}{2.4pt}

\begin{threeparttable}

\begin{adjustbox}{max width=\textheight}
\begin{tabular}{lcccccccccccc}
\toprule
& All & Communication & Consumer & Consumer & Energy & Financials & Health & Industrials & IT & Materials & Real & Utilities \\
& Resignations & Services & Discretionary & Staples &  &  & Care &  &  &  & Estate &  \\
& (1) & (2) & (3) & (4) & (5) & (6) & (7) & (8) & (9) & (10) & (11) & (12) \\
\midrule

\multicolumn{13}{l}{\textit{Panel A: Full ATET}} \\

Overall
& 0.201*** & 0.465*** & 0.158 & 0.704*** & 0.456*** & 0.198** & 0.052 & 0.199 & 0.185* & -0.120 & -0.036 & 0.332 \\
& (0.048) & (0.162) & (0.152) & (0.205) & (0.183) & (0.114) & (0.138) & (0.162) & (0.107) & (0.213) & (0.164) & (0.245) \\

\midrule

\multicolumn{13}{l}{\textit{Panel B: Sector-Time-Specific ATET}} \\

One month after
& 0.355*** & 0.372*** & 0.385*** & 0.466*** & 0.407*** & 0.253*** & 0.246** & 0.321*** & 0.364*** & 0.343*** & 0.374*** & 0.477*** \\
& (0.027) & (0.071) & (0.043) & (0.053) & (0.063) & (0.089) & (0.097) & (0.063) & (0.068) & (0.052) & (0.083) & (0.049) \\

Five months after
& 0.336*** & 0.319*** & 0.354*** & 0.598*** & 0.445*** & 0.242*** & 0.334*** & 0.339*** & 0.368*** & 0.458*** & 0.221* & 0.329** \\
& (0.039) & (0.111) & (0.098) & (0.067) & (0.111) & (0.093) & (0.094) & (0.095) & (0.091) & (0.063) & (0.117) & (0.143) \\

Ten months after
& 0.334*** & 0.328*** & 0.423*** & 0.717*** & 0.523*** & 0.225*** & 0.235** & 0.201 & 0.411*** & 0.487*** & 0.182 & 0.322 \\
& (0.043) & (0.140) & (0.123) & (0.075) & (0.155) & (0.106) & (0.113) & (0.143) & (0.110) & (0.143) & (0.137) & (0.225) \\

Fifteen months after
& 0.317*** & 0.413*** & 0.226* & 0.815*** & 0.661*** & 0.196 & 0.157 & 0.252 & 0.491** & 0.450*** & 0.084 & 0.357 \\
& (0.051) & (0.163) & (0.140) & (0.078) & (0.159) & (0.119) & (0.144) & (0.158) & (0.129) & (0.170) & (0.159) & (0.265) \\

Twenty months after
& 0.263*** & 0.501*** & 0.123 & 0.606** & 0.597*** & 0.246** & 0.062 & 0.262 & 0.389** & 0.089 & 0.041 & 0.495* \\
& (0.052) & (0.171) & (0.144) & (0.241) & (0.175) & (0.124) & (0.153) & (0.171) & (0.166) & (0.218) & (0.177) & (0.267) \\

Fifty months after
& 0.183*** & 0.557*** & 0.112 & 0.885** & 0.550** & 0.187 & 0.051 & 0.187 & 0.077 & -0.331 & -0.109 & 0.317 \\
& (0.057) & (0.206) & (0.186) & (0.276) & (0.249) & (0.134) & (0.165) & (0.198) & (0.123) & (0.274) & (0.206) & (0.317) \\

\midrule

$\lambda$
& 0.0001 & 0.0001 & 0.0001 & 0.0001 & 0.0001 & 0.0001 & 0.0001 & 0.0001 & 0.0001 & 0.0001 & 0.0001 & 0.0001 \\

\# Firms
& 498 & 21 & 52 & 33 & 23 & 68 & 65 & 80 & 68 & 28 & 31 & 30 \\

\# Months
& 104 & 104 & 104 & 104 & 104 & 104 & 104 & 104 & 104 & 104 & 104 & 104 \\

\# Paired Observations
& 51,791 & 2,184 & 5,408 & 3,431 & 2,392 & 7,072 & 6,657 & 8,319 & 7,072 & 2,912 & 3,224 & 3,120 \\

Sample Fraction
& Full & 0.042 & 0.103 & 0.068 & 0.046 & 0.135 & 0.128 & 0.163 & 0.135 & 0.059 & 0.619 & 0.069 \\

\bottomrule
\end{tabular}
\end{adjustbox}

\vspace{0.3cm}

\begin{minipage}{0.96\textheight}
\tiny
\raggedright
\textit{Notes:} The table reports dynamic matrix completion estimates of the average treatment effect on the treated of the \#MeToo shock on firm-level resignation intensity. The dependent variable is the logarithm of resignation intensity. Treatment is defined by the interaction between the post-Weinstein period and a predetermined firm-level exposure index constructed from SEC Form 8-K filings submitted before October 2017. Panel A reports the full-period ATET for the full sample and separately by GICS sector. Panel B reports dynamic sector-time-specific ATETs one, five, ten, fifteen, twenty, and fifty months after the shock. Matrix completion reconstructs counterfactual resignation paths using untreated and lower-exposure observations while allowing for latent firm and time components. Standard errors are reported in parentheses. The penalty parameter is denoted by $\lambda$. Asterisks denote significance levels are at 10\% (*), 5\% (**), and 1\% (***), respectively.
\end{minipage}

\vspace{0.15cm}
\hrule

\end{threeparttable}
\end{sidewaystable*}

\begin{figure}
    \centering
    \includegraphics[width=1\linewidth]{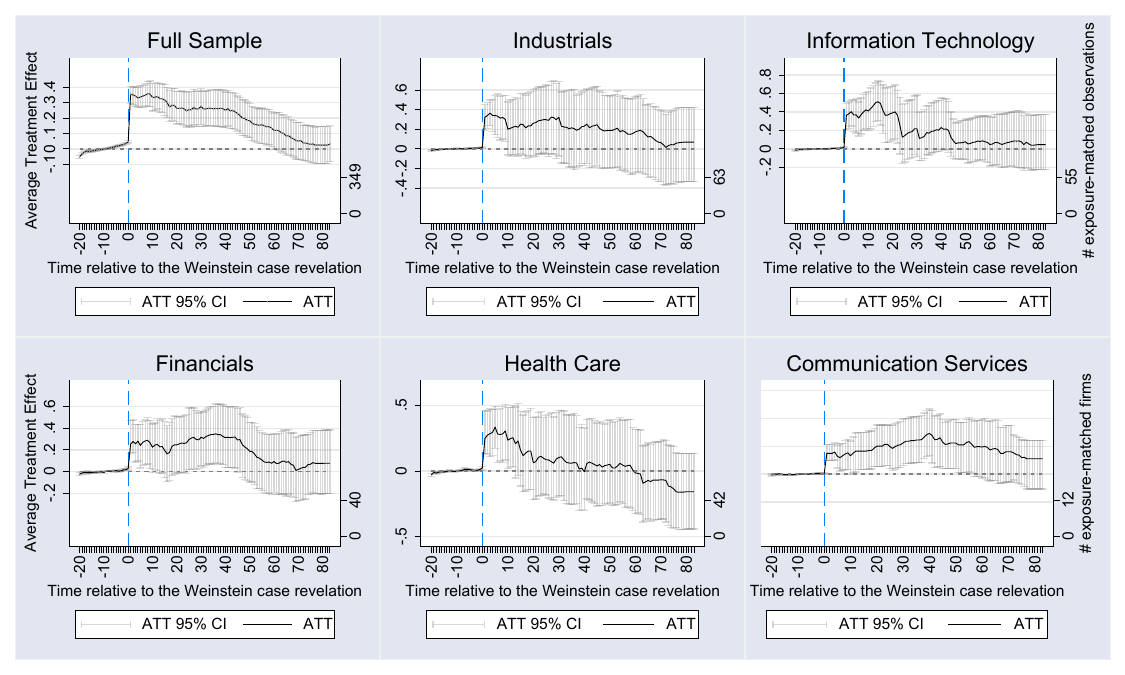}
    \caption{Dynamic Sectoral Treatment Effects of the \#MeToo Shock on Executive and Board Turnover}
    \label{fig:placeholder}
\end{figure}

The sectoral ATET plots in Figure 7 provide one of the clearest pieces of evidence that the \#MeToo shock did not operate as a uniform aggregate event. Instead, the shock generated a heterogeneous and temporally structured process of organizational adjustment. The full-sample panel shows a sharp increase in treatment effects immediately after the Weinstein revelations, followed by a long period of persistence and gradual attenuation. This pattern is important because it is inconsistent with a purely transitory media-attention shock. The response does not simply spike and disappear. It evolves as a multi-year adjustment process.
\\
\\
The full-sample estimate rises sharply around the event date and remains positive for much of the post-shock period. The profile suggests that exposed firms experienced an immediate increase in executive and board turnover, but the effect subsequently decayed only gradually. This dynamic shape is consistent with internal investigations, board review processes, reputational reassessment, compliance reforms, and sequential managerial replacement. In other words, the shock appears to have changed the organizational cost of retaining vulnerable executives rather than merely producing a short-lived burst of resignations. The sector-specific panels reveal substantial heterogeneity. Industrials display a positive and persistent post-shock response, but the effect is moderate and imprecisely estimated. This suggests that while industrial firms were affected, the adjustment process was more diffuse and less sharply concentrated than in reputation-intensive sectors. The broader confidence intervals are also consistent with heterogeneity across industrial subsectors, where firms differ substantially in workforce composition, public visibility, and governance structure.
\\
\\
Information Technology sector shows one of the most pronounced immediate responses. The ATET rises sharply after the event date and remains elevated during the early post-shock period before declining over time. This pattern is highly plausible. Technology firms tend to rely heavily on high-skill labor markets, public reputation, workplace culture, and talent retention. These features make them especially sensitive to reputational shocks involving workplace conduct, managerial behavior, and organizational norms. The attenuation later in the sample suggests that much of the adjustment occurred relatively early, consistent with rapid restructuring and reputational risk management. Financials also exhibit a clear positive response, with treatment effects rising after the shock and remaining elevated for an extended period before gradually declining. This pattern fits the sector's governance environment. Financial firms are heavily regulated, reputationally sensitive, and dependent on investor trust, compliance systems, and managerial credibility. The \#MeToo shock plausibly increased the cost of retaining executives associated with governance vulnerability or reputational risk, particularly in firms whose pre-treatment filings already signaled greater exposure.
\\
\\
In addition, health care sector shows a more muted and less persistent dynamic response. The treatment effect increases after the event but declines steadily and eventually approaches zero or becomes negative later in the sample. This pattern suggests that health-care firms may have responded initially but did not experience the same sustained organizational restructuring observed in technology, finance, or communication services. One interpretation is that health-care governance and compliance systems already operated under strong institutional and regulatory scrutiny, limiting the incremental effect of \#MeToo on executive turnover.
\\
\\
Most importantly, communication services presents perhaps the most theoretically compelling sectoral pattern. The post-shock response is positive, persistent, and relatively smooth. The effect rises gradually after the event, remains elevated for several years, and then attenuates only slowly. This is exactly the pattern one would expect in a sector where public visibility, media salience, brand reputation, and workplace culture are central organizational assets. In such sectors, reputational-information shocks can propagate through firms not merely by revealing individual misconduct, but by changing expectations about governance accountability and acceptable organizational conduct.
\\
\\
Taken together, the sectoral plots support three major findings. First, the \#MeToo shock produced an immediate organizational response among exposed firms. The break around October 2017 is visually apparent in the full sample and in several major sectors. Second, the effect was persistent. The post-shock dynamics do not resemble a short-lived media cycle. Instead, they suggest a sustained process of governance adjustment and executive turnover extending across multiple years. Third, the effect was heterogeneous across sectors. The strongest and most coherent patterns appear in sectors where reputation, public visibility, compliance, workplace culture, and human-capital retention are especially important. This sectoral heterogeneity strengthens the interpretation of \#MeToo as a reputational-information shock rather than a generic macroeconomic or labor-market disturbance. The plots therefore further reinforce the evidence for the paper's central mechanism: \#MeToo altered the organizational equilibrium by increasing the reputational and governance cost of retaining exposed executives, and this effect was strongest in sectors where public scrutiny and organizational culture are most consequential.

\subsection{Sectoral Sources of Organizational Adjustment}

While the preceding results establish that the \#MeToo shock generated economically meaningful increases in resignation intensity among exposed firms, an important remaining question concerns the extent to which the estimated effect is driven by particular sectors. Although previous analyses revealed substantial heterogeneity in sector-specific responses, a large average treatment effect could still potentially arise from a small number of sectors exhibiting unusually strong organizational reactions. To assess this possibility, Table 6 implements a leave-one-sector-out sensitivity analysis in which individual GICS sectors are sequentially excluded and the baseline dynamic two-way fixed effects specification is re-estimated. The objective of this exercise is not merely to assess robustness in a mechanical sense. Rather, it identifies the sectoral environments contributing most strongly to the estimated organizational response. If exclusion of a given sector substantially attenuates the estimated coefficient, that sector can be interpreted as contributing disproportionately to the observed effect. Conversely, if exclusion leaves the coefficient largely unchanged, the organizational response appears more broadly distributed throughout the economy.
\\
\\
Column (1) reports the baseline estimate of 0.114, implying approximately 12.1 percent higher resignation intensity following the \#MeToo shock among firms with greater pre-treatment exposure. The estimated coefficient remains positive and highly statistically significant across all exclusion exercises, immediately suggesting that the baseline result is not driven by any single sector. Nevertheless, the magnitude of attenuation differs meaningfully across specifications. The most pronounced reduction occurs when Communication Services are excluded. The estimated coefficient declines from 0.114 to 0.099, implying a reduction of approximately 13 percent relative to the baseline estimate. This attenuation is economically important because Communication Services firms are particularly exposed to public scrutiny, reputation-sensitive activities, media environments, and consumer-facing interactions. These characteristics make them natural candidates for stronger organizational responses following a shock involving workplace norms and accountability.
\\
\\
Substantial attenuation also occurs following exclusion of Energy and Consumer Staples firms. Excluding Energy reduces the coefficient to 0.094, while excluding Consumer Staples yields an estimate of 0.094 as well. These sectors therefore appear to contribute importantly to the aggregate organizational response. In the case of Energy, the result may reflect substantial organizational hierarchies and traditionally structured managerial systems where changes in accountability norms generate pronounced adjustment costs. Consumer Staples firms, by contrast, operate in highly visible consumer environments where reputational considerations may be especially important. In addition, Real Estate also exhibits meaningful influence, with the estimated coefficient increasing only modestly after exclusion. The reduction suggests that these firms contributed positively to the overall treatment effect, although to a lesser extent than Communication Services or Energy.
\\
\\
By contrast, excluding Industrials, Information Technology, Financial Services, Consumer Discretionary industries, or Utilities generates relatively small changes in coefficient magnitude. For example, exclusion of Industrials produces an estimate of 0.112 and exclusion of Information Technology yields 0.111, values almost identical to the baseline estimate. These findings suggest that although these sectors exhibit positive treatment effects individually, their contribution to the aggregate estimate is more diffuse and less concentrated. Perhaps the most important result is that no exclusion exercise eliminates the statistical significance of the estimated effect. Even under the most influential exclusion scenarios, the estimated coefficients remain positive, statistically significant, and economically meaningful. This finding substantially strengthens the credibility of the baseline results because it demonstrates that the estimated organizational response is not generated by a narrow subset of firms operating in unusually sensitive environments.
\\
\\
Taken together, the evidence suggests that the \#MeToo shock generated broad-based organizational responses across the corporate landscape, while certain sectors contributed disproportionately to the aggregate effect. The strongest contributors appear to be sectors characterized by high reputational exposure, intensive public scrutiny, consumer interaction, and organizational dependence on intangible assets. The resulting pattern is consistent with the broader interpretation of \#MeToo as a reputational-information shock interacting with latent organizational vulnerabilities rather than a localized phenomenon concentrated within a small set of industries.

\begin{sidewaystable*}[p]
\centering
\thispagestyle{empty}

\caption{Leave-One-Sector-Out Sensitivity Analysis of the \#MeToo Shock on Corporate Resignation Intensity}
\label{tab:leave_one_sector_out}

\normalsize
\renewcommand{\arraystretch}{1.45}
\setlength{\tabcolsep}{5pt}

\begin{threeparttable}

\begin{adjustbox}{width=0.98\textheight}
\begin{tabular}{lccccccccccc}
\toprule
& Baseline
& w/o Industrials
& w/o Information Technology
& w/o Financial Services
& w/o Health Care
& w/o Consumer Disc.
& w/o Consumer Staples
& w/o Real Estate
& w/o Utilities
& w/o Energy
& w/o Communication Services \\
& (1) & (2) & (3) & (4) & (5) & (6) & (7) & (8) & (9) & (10) & (11) \\
\midrule

Post-Weinstein $\times$ Exposure
& 0.114**
& 0.112**
& 0.111**
& 0.112***
& 0.137***
& 0.117**
& 0.094**
& 0.129
& 0.108***
& 0.094***
& 0.098*** \\

& (0.045)
& (0.048)
& (0.049)
& (0.049)
& (0.049)
& (0.048)
& (0.047)
& (0.047)
& (0.047)
& (0.047)
& (0.047) \\

\midrule

Firm Fixed Effects
& YES & YES & YES & YES & YES & YES & YES & YES & YES & YES & YES \\
\(p\)-value
& 0.000 & 0.000 & 0.000 & 0.000 & 0.000 & 0.000 & 0.000 & 0.000 & 0.000 & 0.000 & 0.000 \\

\midrule

Time Fixed Effects
& YES & YES & YES & YES & YES & YES & YES & YES & YES & YES & YES \\
\(p\)-value
& 0.000 & 0.000 & 0.000 & 0.000 & 0.000 & 0.000 & 0.000 & 0.000 & 0.000 & 0.000 & 0.000 \\

\midrule

Outcome Persistence
& 0.000 & 0.000 & 0.000 & 0.000 & 0.000 & 0.000 & 0.000 & 0.000 & 0.000 & 0.000 & 0.000 \\

\midrule

\# Firms
& 498 & 418 & 430 & 430 & 434 & 446 & 465 & 467 & 468 & 475 & 477 \\

\# Months
& 104 & 104 & 104 & 104 & 104 & 104 & 104 & 104 & 104 & 104 & 104 \\

\# Paired Observations
& 45,317 & 38,037 & 39,129 & 39,129 & 39,493 & 40,585 & 42,315 & 42,496 & 42,587 & 43,224 & 43,406 \\

\bottomrule
\end{tabular}
\end{adjustbox}

\vspace{0.7cm}

\begin{minipage}{0.93\textheight}
\scriptsize
\raggedright
\textit{Notes:} The table reports leave-one-sector-out sensitivity estimates of the impact of the \#MeToo shock on firm-level resignation intensity. Each specification sequentially excludes one GICS sector and re-estimates the baseline dynamic two-way fixed effects model. The dependent variable is the logarithm of resignation intensity. The treatment variable is defined as the interaction between the post-Weinstein period indicator and a predetermined firm-level exposure index constructed from SEC Form 8-K filings submitted before October 2017. All specifications include four lags of the dependent variable, firm fixed effects, and month fixed effects. Standard errors clustered at the firm level are reported in parentheses. Outcome-persistence \(p\)-values correspond to tests of lagged resignation dynamics. Asterisks denote significance levels are at 10\% (*), 5\% (**), and 1\% (***), respectively..
\end{minipage}

\vspace{0.15cm}
\hrule

\end{threeparttable}
\end{sidewaystable*}

\subsection{Discussion}

The empirical evidence presented above suggests that the \#MeToo movement generated organizational consequences extending substantially beyond isolated episodes of executive turnover. Across a broad range of specifications, firms characterized by greater ex ante exposure experienced significantly larger post-shock increases in resignation intensity, with effects persisting for multiple years and propagating through several organizational dimensions, including disciplinary turnover, board restructuring, managerial reorganization, compliance responses, and informational amplification. The dynamic and heterogeneous nature of these responses suggests that the \#MeToo movement functioned not merely as a contemporaneous media event, but rather as a large-scale reputational-information shock interacting with latent organizational vulnerabilities embedded within firms' governance structures.
\\
\\
The findings contribute directly to the literature on executive turnover and corporate governance. Traditional agency-based frameworks emphasize the role of boards and monitoring institutions in disciplining managerial behavior and reducing agency conflicts (\citep{jensen1976firm, fama1983ownership}). Within this literature, executive departures are often interpreted as responses to poor firm performance, governance failures, or changing organizational incentives (\citep{warner1988management, weisbach1988outside}). The present evidence suggests an additional mechanism through which turnover decisions emerge. Organizational accountability appears responsive not only to internal firm characteristics but also to broader shifts in social norms and reputational environments. Firms with greater ex ante exposure did not merely experience increased turnover rates but exhibited systematic restructuring across multiple layers of organizational hierarchy, suggesting that changes in social accountability norms can alter the incentives governing executive retention and monitoring decisions.
\\
\\
The results also relate to a broader literature examining reputational penalties and informational shocks in corporate environments. Prior work demonstrates that disclosure of misconduct can generate substantial economic costs through reputational damage, litigation exposure, and investor responses (\citep{karpoff2008cost}). Related research shows that firms adjust behavior in response to changes in informational environments and uncertainty (\citep{baker2016uncertainty, hassan2019political}). The evidence presented here complements these findings by demonstrating that informational shocks may also reshape internal organizational structures themselves. The observed increases in executive turnover and board restructuring suggest that reputational shocks propagate not only through market outcomes but also through managerial and governance channels.
\\
\\
An important contribution of the paper concerns the dynamic nature of organizational adjustment. Existing work examining media attention or public scandals frequently emphasizes short-run reactions and immediate responses following salient events (\citep{tetlock2007sentiment, tetlock2008fundamentals}). The dynamic matrix completion estimates presented above indicate a substantially different pattern. Organizational responses emerge rapidly but remain persistent over extended periods before gradually attenuating toward a new equilibrium. Such patterns are difficult to reconcile with purely transitory media-attention mechanisms. Instead, they appear more consistent with organizational learning and institutional adaptation processes in which firms revise monitoring structures, governance arrangements, and managerial composition over time.
\\
\\
The evidence further contributes to a growing literature using high-dimensional textual information to recover latent institutional characteristics from disclosure environments (\citep{gentzkow2019text, hoberg2016text, hassan2019political}). Rather than relying on ex post scandal identification or realized misconduct, the empirical design developed here constructs a predetermined exposure measure using computational analysis of pre-treatment SEC filings. The resulting framework suggests that mandatory disclosure text contains economically meaningful information regarding latent organizational vulnerability that becomes consequential when firms are exposed to large informational shocks. In this sense, the paper contributes not only to the study of organizational adjustment but also to the broader integration of computational methods into empirical corporate governance research.
\\
\\
More broadly, the findings speak to a wider literature examining how social norms and institutional environments shape economic behavior. Changes in social expectations can alter the implicit constraints governing organizational conduct and managerial incentives even in the absence of formal regulatory intervention. The \#MeToo movement appears to have changed the expected costs associated with maintaining organizational structures perceived as vulnerable to reputational scrutiny. The resulting effects therefore extend beyond the immediate consequences of individual allegations and instead reflect broader shifts in accountability norms within corporate environments. Taken together, the evidence suggests that the \#MeToo movement should not be viewed solely as an isolated social event affecting individual firms or executives. Rather, it appears to have acted as a large-scale informational and institutional shock that generated persistent organizational responses among firms with greater pre-existing vulnerabilities. The results imply that reputational shocks may alter internal organizational structures in ways that extend far beyond immediate market reactions and can produce durable changes in managerial behavior, governance systems, and accountability mechanisms.

\section{Conclusion}

This paper began with a simple but difficult question. Can a broad social movement alter corporate governance in a systematic and persistent manner when the underlying sources of misconduct are often hidden from direct observation? The challenge is particularly acute in corporate environments where disclosure rules frequently reveal that executives leave but provide little information regarding why they leave. Under such conditions, identifying the consequences of a common reputational shock requires moving beyond traditional treatment-control comparisons and toward a framework capable of exploiting heterogeneous exposure to a shared informational event.
\\
\\
We conceptualize the October 2017 Weinstein revelations and the emergence of the \#MeToo movement as an economy-wide reputational-information shock that altered beliefs regarding the expected costs of misconduct and intensified scrutiny within firms. Because all firms were simultaneously exposed to this shift in the informational environment, identification relied on differences in firms' ex ante vulnerability to governance-related disclosure and reputational pressure. We developed a theoretical framework in which board members and executives respond to changes in reputational incentives through forward-looking hazard rates, Bayesian belief updating, and strategic complementarities among organizational actors. The resulting model predicts that common informational shocks need not generate uniform responses. Instead, organizational reactions should depend on pre-existing exposure and internal vulnerability.
\\
\\
The empirical evidence strongly supports this prediction. Firms exhibiting greater pre-treatment exposure experienced systematically larger increases in executive and board turnover following the \#MeToo shock. These effects emerge rapidly, persist over multiple years, and propagate through multiple organizational dimensions, including forced departures, board restructuring, compliance responses, managerial reorganization, and informational amplification. Dynamic estimates indicate that the effects gradually attenuate over time but do not disappear, suggesting a process of sustained organizational adjustment rather than a temporary reaction to media attention.
\\
\\
Several broader implications follow. First, the findings suggest that reputational shocks operate through mechanisms extending far beyond individual misconduct. The \#MeToo movement did not simply expose particular executives or firms. It altered the expected consequences of remaining associated with organizational environments perceived as vulnerable to scrutiny. As those expectations changed, boards, managers, and firms adapted.
\\
\\
Second, the evidence suggests that informational environments themselves constitute an important source of institutional heterogeneity. Two firms facing the same external event need not respond similarly if they differ in their underlying governance structures, disclosure histories, or organizational vulnerability. The consequences of large social shocks therefore depend not only on the shock itself, but also on the latent institutional structures through which the shock propagates.
\\
\\
Third, the paper illustrates the value of integrating computational methods into causal analysis of institutions and organizations. By transforming large-scale disclosure text into a predetermined measure of organizational exposure, the analysis demonstrates how high-dimensional information can be used to identify otherwise unobservable dimensions of governance and reputational risk. As textual and digital data continue to expand, such approaches may substantially reshape the study of institutional change and organizational behavior.
\\
\\
More broadly, the results speak to an increasingly important feature of modern economies. Institutions are shaped not only by laws, regulations, and formal interventions. They are also shaped by information, expectations, and social norms. Major informational events can alter incentives even when formal rules remain unchanged. In such environments, organizations do not simply react to observable misconduct. They react to changing beliefs about what society is willing to tolerate. The broader lesson is therefore straightforward. The most consequential institutional changes are not always written into statutes or imposed by regulators. Sometimes they emerge from shifts in collective beliefs that alter the incentives governing behavior itself. When that occurs, information becomes governance, and reputational pressure becomes a mechanism of institutional change.

\bibliography{references}

\end{document}